\def\A{{\cal A}}
\def\I{{\cal I}}
\def\P{{\cal P}}
\def\s{{\mathcal S}}
\def\OO{{\cal O}} % \O is a capital oh with a slash
\DeclareMathOperator{\interior}{int}
\newcommand{\GS}{\textsf{\sc Game-Of-Same-Side}}
\newcommand{\IR}{\mathbb{R}}
\newcommand{\AC}{\textsf{\sc Algorithm-Center}}
\newcommand{\AV}{\textsf{\sc Algorithm-Vertex}}
\DeclarePairedDelimiter\abs{\lvert}{\rvert}%
\DeclarePairedDelimiter\norm{\lVert}{\rVert}%
\let\oldabs\abs
\def\abs{\@ifstar{\oldabs}{\oldabs*}}
\let\oldnorm\norm
\def\norm{\@ifstar{\oldnorm}{\oldnorm*}}
\DeclareFontFamily{U}{tipa}{}
\DeclareFontShape{U}{tipa}{m}{n}{<->tipa10}{}
\newcommand{\arc@char}{{\usefont{U}{tipa}{m}{n}\symbol{62}}}%
\newcommand{\arc}[1]{\mathpalette\arc@arc{#1}}
\newcommand{\arc@arc}[2]{%
  \sbox0{$\m@th#1#2$}%
  \vbox{
    \hbox{\resizebox{\wd0}{\height}{\arc@char}}
    \nointerlineskip
    \box0
  }%
}
\newtheorem{property}{Property}
\newtheorem{corollary}{Corollary}
\newtheorem{clm}{Claim}
\newtheorem{definition}{Definition}
\newtheorem{theorem}{Theorem}
\newtheorem{lemma}{Lemma}
\newtheorem{observation}{Observation}
\newcommand{\myqed}{\hfill $\Box$}
\begin{document}
\title{Online Geometric Covering and  Piercing\thanks{Preliminary version of this paper has appeared in the proceedings of the 42nd IARCS Annual Conference on
Foundations of Software Technology and Theoretical Computer Science (FSTTCS), 2022~\cite{DeJKS22}.}}

\author[1]{Minati De\footnote{Partially supported by SERB-MATRICS grant MTR/2021/000584.}}
\affil{Deptartment of Mathematics\\ Indian Institute of  Technology Delhi, India\\
\texttt{\{minati,satyam.singh$^{\star}$\}@maths.iitd.ac.in, sakshampv@gmail.com, saratvarmakallepalli@gmail.com }}
 
\author[1]{Saksham Jain}
\author[1]{Sarat Varma Kallepalli}
\author[1]{Satyam Singh$^{\star}$\footnote{Supported by CSIR (File Number-09/086(1429)/2019-EMR-I).

$\ ^{\star}$Corresponding author}}

\maketitle

\begin{abstract}
We consider the online version of the piercing set problem, where geometric objects arrive one by one, and the online algorithm must maintain a valid piercing set for the already arrived objects by making irrevocable decisions.
It is easy to observe that any deterministic algorithm solving this problem for intervals in $\IR$ has a competitive ratio of at least $\Omega(n)$. 
This paper considers the piercing set problem for similarly sized objects.
We propose a deterministic online algorithm for similarly sized fat objects in $\IR^d$.
 For  homothetic hypercubes in $\IR^d$ with side length in the range  $[1,k]$, we propose a deterministic algorithm having a competitive ratio of at most~$3^d\lceil\log_2 k\rceil+2^d$.  In the end, we show deterministic lower bounds of the competitive ratio for similarly sized $\alpha$-fat objects in $\IR^2$ and homothetic hypercubes in $\IR^d$. Note that piercing translated copies of a convex object is equivalent to the unit covering problem, which is well-studied in the online setup.
Surprisingly, no upper bound of the competitive ratio was known for the unit covering problem when the corresponding object is anything other than a ball or a hypercube. Our result yields an upper bound of the competitive ratio for the unit covering problem when the corresponding object is any convex object in $\IR^d$.\\

\noindent
{\textit{\textbf{Keywords. }}}{ {Competitive ratio, Fat objects, Geometric objects,  Online algorithm, Piercing set problem,  Unit covering problem.}}
\end{abstract}

\section{Introduction}
\pagenumbering{arabic}
Piercing is one of the most important problems in computational geometry. 
A set ${\P}\subseteq \IR^d$ of points is defined as a \emph{piercing set} for a set ${\cal S}$ of geometric objects in $\IR^d$ if each object in ${\cal S}$ contains at least one point from ${\P}$. Given a set $\cal S$ of objects, the problem is to find a piercing set ${\P}$ of the minimum cardinality.

In this paper, we study the \emph{online version} of the problem in which  geometric objects arrive one by one, and the online algorithm needs to maintain a valid piercing set for the already arrived objects.
Once a new geometric object $\sigma$ arrives, the online algorithm needs to place a point $p\in \sigma$  if the existing piercing set does not already pierce it. Note that an online algorithm may add points to the piercing set
but cannot remove points from it, i.e., the online algorithm needs to make irrevocable decisions. The goal of the problem is to minimize the cardinality of the piercing set.
We analyze the quality of our online algorithm by competitive analysis~\cite{BorodinE}. The \emph{competitive ratio} of an online algorithm is  $\sup_{\beta}\frac{\A_{\beta}}{\OO_{\beta}}$, where $\beta$ is an input sequence, and $\OO_{\beta}$ and $\A_{\beta}$ are the cost of the solution produced by an optimal offline algorithm and the online algorithm, respectively, for
the input sequence $\beta$. Here, the supremum is taken over all possible input sequences $\beta$.

An application in wireless networks inspires the online version of this problem~\cite{EvenS14}. Here, the points
model base stations and the centers of objects model clients. The reception range of each client has some geometric shape (e.g., disks, hexagons, etc.). The algorithm must place a base station serving a new uncovered client. Since installing the base station is expensive, the overall goal is to place the minimum number of base stations.

\subsection{Related Work}
 In the offline setting, the piercing set problem is a well-studied problem~\cite{Chan03,EfratKNS00,KatzNS03}. For one-dimensional intervals, a minimum piercing set can be found in $O(n\log{ c})$ time, where $n$ is the number of one-dimensional intervals and $c$ is the size of a minimum piercing set~\cite{Nielsen00}.
On the other hand, computing a minimum piercing set is  NP-complete for unit squares~\cite{GareyJ79}.  Chan~\cite{Chan03} proposed a polynomial-time approximation scheme for $\alpha$-fat convex objects. Though the piercing set problem is well-studied in the offline setup, surprisingly, there is a lack of study for this problem in other models. Katz et al.~\cite{KatzNS03} studied this problem for one-dimensional intervals in the dynamic setup.  In this setup, the object can arrive as well as depart, but unlike the online model, here, the decisions can be reverted. For a set $\mathcal{S}$  of intervals in $\IR$, they proposed a linear-size dynamic data structure that allows computing a new minimum piercing set in $O\left(c(\mathcal{S}) \log |\mathcal{S}| \right)$ time per insertion or deletion, where $c(\mathcal{S})$ is the size of a minimum piercing set for $\mathcal{S}$.

The set cover and hitting set problems are closely related  to the piercing set problem. Let $\P$ be a set of elements, and let $\cal S$ be a family of subsets of $\P$. A \emph{set cover} is a subset $\cal C \subseteq \cal S$ such that the union of sets in $\cal C$ covers all elements of $\cal P$ and a \emph{hitting set} is a collection of points $\cal H\subseteq\P$ such that the set $\cal H$ intersects every set $s$ in $\cal S$. The set cover (respectively, hitting set) problem aims to find a set cover $\cal C$ (respectively, hitting set $\cal H$)  of the minimum cardinality. 
If $\P=\IR^d$ and $\cal S$ is a family of objects in $\IR^d$, then the corresponding hitting set problem is the same as the piercing set problem.
It is well known that a set cover of the tuple $({\cal P}, {\cal S})$ is a hitting set of the tuple $({\cal P}^{\perp}, {\cal S}^{\perp})$~\cite{AgarwalP20}. Here, for each set $s\in{\cal S}$ there is an element in ${\cal P}^{\perp}$ and for each element $p\in{\cal P}$ there is a set $s_p$, namely, $s_p=\{s\in{\cal S}\ |\ p\in s\}$, in ${\cal S}^{\perp}$. 
% Both problems are classical NP-hard problems~\cite{Karp}. 
 In the offline setup, if $\cal P\subset\IR$ and the set $\cal S$ consists of intervals in $\mathbb{R}$, the set cover problem can be solved in polynomial time. However, both the set cover and hitting set problems are  NP-hard, when $\cal S$ consists of simple geometric objects like unit disks in $\mathbb{R}^2$ and ${\cal P}\subset{\IR}^2$~\cite{FowlerPT81}. Alon et al.~\cite{Alon09} initiated the study of the online set cover problem. In their model, the finite sets $\P$ and $\cal S$ are already known. However, the order of arrival of points from the set $\P$ is unknown. Upon the arrival of an uncovered point in $\cal P$, the online algorithm must
choose a set $s\in \cal S$ that covers the point.
The algorithm presented by Alon et al.~\cite{Alon09} has a competitive ratio of $O(\log n \log m)$, where $|\P|=n$ and $|{\cal S}|=m$.
Recently, in the same model, Khan et al.~\cite{KhanLRSW23} proposed an algorithm having an optimal competitive ratio of $\Theta(\log n)$ for the set cover problem when $\cal S$ consists of homothetic (translated and scaled) squares and
$\P$ consists of $n$ points in $\IR^2$.
Even and Smorodinsky~\cite{EvenS14} studied the online hitting set problem, where both sets $\P$ and $\cal S$ are known in advance, but the order of the arrival of input objects in $\cal S$ is unknown. In this model, they proposed  an algorithm having an optimal competitive ratio of $\Theta(\log n)$ for intervals, half-planes or unit disks, where $n$ is the cardinality of the set $\cal S$. 
Recently, Khan et al.~\cite{KhanLRSW23} studied the hitting set problem, where for a given positive integer $N$, the point set $\P$ is a subset of integral points from $[0,N)^2$ and $\cal S$ consists of axis-parallel squares in $\IR^2$ whose vertices are from $[0,N)^2$. They obtained an optimal $\Theta(\log N)$-competitive algorithm for this variant.
{Extending the result of Even and Smorodinsky~\cite{EvenS14}}, recently, De et al.~\cite{DeMS23}  studied  the set cover (respectively, hitting set) problem in an online model where only the set $\cal S$ (respectively, $\P$) is known in advance. The set $\P$ (respectively, $\cal S$) is unknown apriori, and the elements of the set $\P$ (respectively, $\cal S$) are revealed one after another. In this model, they propose an optimal $\Theta(\log n)$-competitive algorithm when objects in $\cal S$ are   translated copies of a  regular $k$-gon ($k\geq 4$) or a disk, and $\P$ consists of points in $\IR^2$.
When $\P=\mathbb{Z}^d$ and $\cal S$ is a family of geometric objects in $\IR^d$, the corresponding online hitting set problem is studied in~\cite{DeS22}.

A related problem is the \emph{unit covering} problem (a special variant of the set cover problem), where $\P$ is a set of points, and the set $\cal S$ consists of all (infinite) possible translated copies of a given object $s$.  Note that when the object $s$ is convex, the unit covering problem is equivalent to the special case of the piercing set problem, where objects are translated copies of a convex object $-s$ (see Section~\ref{Unit_cover}). In the online version of the unit covering problem, the set $\cal S$ is not known in advance. Charikar et al.~\cite{CharikarCFM04} studied the online version of the unit covering problem, where $s\subset \IR^d$ is a unit ball.  
 They proposed an online algorithm having a competitive ratio of~$O(2^dd\log d)$. They also proved $\Omega(\log d/\log\log \log d)$ as the deterministic lower bound of the competitive ratio for this problem.
 Dumitrescu et al.~\cite{DumitrescuGT20} improved both the upper and lower bounds of the competitive ratio to $O({1.321}^d)$ and $\Omega(d+1)$, respectively.  When $s\subset \IR^d$ is a centrally symmetric convex object, they proved that the competitive ratio of every deterministic online algorithm is at least $I(s)$, where $I(s)$ is the illumination number (for definition, see Section~\ref{Notations}) of the object $s$. 
 Recently, Dumitrescu and T{\'{o}}th~\cite{DumitrescuT22} proved that the competitive ratio of any deterministic online algorithm for the unit covering problem is at least $2^d$  when $\P=\IR^d$ and {$ s$ is a hypercube in $\IR^d$}. 
Surprisingly, we did not find any upper bound of the competitive ratio {when  $s$ is anything other than a ball or a hypercube}. In this paper, we raise this question and obtain a very general upper bound for the same.

Another relevant problem is the \emph{chasing convex objects}, initiated by Friedman and Linial~\cite{FriedmanL93}. 
Here, similar to the piercing set problem,  convex object $\sigma_i$ arrives {at time $i$}, and the online algorithm needs to place a piercing point $p_i\in \sigma_i$, and it pays a cost of the distance between the currently placed point $p_i$ and the last placed point $p_{i-1}$, i.e., $d(p_i,p_{i-1})$.
Note that the objective of this problem is different from the piercing set problem. Here, the aim is to minimize the total distances between successive piercing points $\Sigma_{i=1}^{n-1}d(p_i,p_{i+1})$, where $n$ is the number of input objects.
Friedman and Linial~\cite{FriedmanL93} proved that no deterministic online algorithm could achieve a competitive ratio better than $\sqrt{d}$ {for chasing convex objects in $\IR^d$}. They gave an online algorithm with a competitive ratio of at most $9\sqrt{2} + 5\sqrt{10}$ {for chasing lines in $\IR^2$}.
Later, Bienkowski et al.~\cite{BienkowskiBCCJK19} improved this result by presenting an algorithm with a competitive ratio of at most 3  for chasing lines in $\IR^d$.  
Bubeck et al.~\cite{BubeckLLS19} presented an online algorithm with a competitive ratio of $2^{O(d)}$ for {chasing convex objects in $\IR^d$}. 
Parallel to this, Bubeck et al.~\cite{BubeckKLLS20} presented an online algorithm having competitive ratio of $O(\min(d,\sqrt{d\log n}))$ when convex objects in $\IR^d$ are nested. Later, Sellke~\cite{Sellke20} and Argue et al.~\cite{ArgueGTG21} independently improved the result by showing that there exists an online algorithm having a competitive ratio of $O(\sqrt{d\log n})$ and $O(\min(d,\sqrt{d\log n}))$, respectively, for  chasing convex objects {in $\IR^d$}.

\subsection{Our Contributions}
First, we observe that the competitive ratio of every deterministic online algorithm for piercing intervals in $\mathbb{R}$ is at least $\Omega(n)$, where $n$ is the length of the input sequence. (Theorem~\ref{lower_bd_int}).  
Next, we show that piercing translated copies of a convex object is equivalent to the unit covering problem (Theorem~\ref{equivalence}). 
As an implication, all the results available in the literature~\cite{ChanZ09, CharikarCFM04, DumitrescuGT20, DumitrescuT22} for the online unit covering problem using translates of a convex object would  carry forward to the  piercing set problem for translates of a convex object. 
As a result, due to~\cite{DumitrescuGT20}, for translates of a centrally symmetric convex object $C$,  the competitive ratio of any deterministic online algorithm is at least the illumination number of the object $C$ (Theorem~\ref{illum}). 

Next, we ask what would happen when objects are {similarly sized}.
We consider similarly sized fat objects. For our purpose,  we define $\alpha$-fat and $\alpha$-aspect$_{\infty}$ fat objects in Section~\ref{bounded}. 
For $\alpha$-fat objects, the value of $\alpha$ is invariant under translation, rotation, reflection and scaling, whereas for $\alpha$-aspect$_{\infty}$ fat objects, the value of $\alpha$ is invariant under translation, reflection and scaling.  For any object in $\IR^d$, the value of $\alpha$ is  in the  interval $(0,1]$.

To obtain an upper bound, we consider a {well-studied}~{\cite{DumitrescuGT20,DumitrescuT22}} algorithm, $\AC$, that works as follows.  On receiving a new input object $\sigma$, if the existing piercing set does not pierce it, our online algorithm adds the center of $\sigma$ to the piercing set. 
For a hypercube $\sigma$, {a point in $\sigma$ is a \emph{center}} from which the maximum distance from any point of $\sigma$ is minimized. For $\alpha$-fat and $\alpha$-aspect$_{\infty}$ fat objects, a generalized definition of the center is given in Section~\ref{bounded}.  { We refer to Section~\ref{bounded} for the definition of the width of a fat object.}

\begin{enumerate}
 
\item First, we prove that for piercing similarly sized $\alpha$-aspect$_{\infty}$ fat objects in $\IR^d$, having width in the range $[1,k]$, \AC\ achieves a competitive ratio  of at most~${\left({\Bigl\lceil}2\left(1+\frac{1}{\alpha}\right)\Bigr\rceil^d-\Bigl\lfloor\frac{2}{\alpha}\Bigr\rfloor^d\right)} {\Bigl\lceil}\log_{1+\alpha}(\frac{2k}{\alpha}){\Bigr\rceil} $ $+1$(Theorem~\ref{thm:alpha_inf}).

\item  Next, we show that for piercing similarly sized $\alpha$-fat objects in $\IR^3$, having width in the range  $[1,k]$, \AC\ achieves  a competitive ratio of at most~${\left(\left(1+\frac{1}{\sin(\theta/2)}\right)^3-1\right)\Bigl\lceil}\log_{1+x}(\frac{2k}{\alpha}){\Bigr\rceil}+1$, where $\theta=\frac{1}{2}\cos^{-1}\left(\frac{1}{2} +\frac{1}{1+\sqrt{1+4\alpha^{2}}}\right)$ and  $x=\frac{\sqrt{1+4{\alpha}^2}-1}{2}$ (Theorem~\ref{thm:alpha-ball}). 
We achieve a similar result for piercing similarly sized $\alpha$-fat objects in $\IR^2$ having width in the range  $[1,k]$ (Theorem~\ref{thm10}).

\item Then, for piercing homothetic hypercubes in $\IR^d$, having side length in the range  $[1,k]$, we propose an algorithm,  $\AV$, that achieves a competitive ratio of at most~$3^d{\lceil}\log_2 k{\rceil}+2^d$ (Theorem~\ref{thm:d-hyp}).
\item Next, we consider the {lower bound for the problem}. We prove that the competitive ratio of every deterministic online algorithm  for piercing similarly sized $\alpha$-fat objects in $\IR^2$, {having width in the range $[1,k]$}, is at least~${\Bigl\lfloor}\log_{\frac{2+{\epsilon}}{\alpha}}(k){\Bigr\rfloor} + 1$,  where $0<\epsilon<\frac{1}{4}$ is a sufficiently small constant close to $0$ (Theorem~\ref{thm:lowerBoundPolygon}).

\item Later, generalizing this lower bound result to higher dimensions, we show that the competitive ratio of every deterministic online algorithm for piercing homothetic hypercubes  in $\IR^d$, having side length in the range $[1,k]$, is at least $d{\Bigl\lfloor}\log_{(2+\epsilon){^2}}k{\Bigr\rfloor}+2^d$,  where $0<\epsilon<\frac{1}{4}$ is a sufficiently small constant close to $0$ (Theorem~\ref{thm_sq_d}).
\end{enumerate}

 \subsection{Organization} 
 The paper is organized as follows. First, we give some relevant definitions and preliminaries in 
 Section~\ref{Notations}.  The relationship between the unit covering and the unit piercing problem is discussed in Section~\ref{Unit_cover}.
 Definitions related to the fat object are given in Section~\ref{bounded}.
We present all the upper bound related results in Section~\ref{upp}. All the lower bound related results are  in Section~\ref{sec:lb}. Finally, in Section~\ref{Conclusion}, we give a conclusion.

%%%%%%%%%
%%%%.  Notation and Prelims
%%%%%%%%%

\section{Preliminaries}\label{Notations}
{We use $\mathbb{Z}^{+}$ and $\IR^{+}$ to denote the set of positive integers and positive real numbers.}
{For any $n\in\mathbb{Z}^{+}$}, we use $[n]$ to denote the set $\{1,2,\ldots,n\}$.  {For any $i\in[d]$, the $i$th coordinate of a point $p\in\IR^d$ is denoted  by $p(x_i)$.}
If not explicitly mentioned, we use the term \emph{object} to denote a compact set in $\IR^d$ with a nonempty interior.
The interior of an object $\sigma$ is denoted by $\interior(\sigma)$.
{A set $\cal S$ of objects is said to be \emph{similarly sized} if the ratio of the largest diameter of an object in $\cal S$ to the smallest diameter of an object in $\cal S$ is bounded by a fixed constant.}
Any set of the form $\lambda \sigma+x=\{\lambda c+x\ |\ c\in \sigma\}$, where $x\in\IR^d$ and $\lambda \in {\IR}^+$, is called a \emph{homothetic copy} of $\sigma$. A set $\cal S$ of objects is said to be \emph{homothetic} if each object $\sigma\in \cal S$ is a homothetic copy of every other object $\sigma' \in \cal S$. The \emph{illumination number} of an object $\sigma$, denoted by $I(\sigma)$, is the minimum number of smaller homothetic copies ($\lambda<1$) of $\sigma$ whose union contains $\sigma$~\cite{Pach}.
A point $y'$ is a \emph{reflection of a point} $y$ in a plane over a point $x$ if the midpoint of the line segment $\overline{yy'}$ is $x$.
Now, we define the \emph{reflection of an object} $\sigma$ over a point $c\in \IR^d$, denoted by $-\sigma(c)$, as $-\sigma(c)=\{y\in\IR^d\ |\ \frac{x+y}{2}=c \text{ and } x\in\sigma\}=\{2c-x\ |\ x\in\sigma\}$(see Figure~\ref{fig:ref_1}).
Note that for any $c_1,c_2\in\IR^d$, $-\sigma(c_1)=-\sigma(c_2)+\tau$, where $\tau=2(c_1-c_2)$ is a translating vector. We use $-\sigma$ to denote a translated copy of $-\sigma(c)$.
An object is considered  \emph{centrally symmetric} if there exists a point $c\in\sigma$ such that $\sigma=-\sigma(c)$.

Let $x$ and $y$ be any two points in $\IR^d$. We use $d(x,y)$ (respectively, $d_{\infty}(x,y)$) to denote the distance between  $x$ and $y$ under the $L_2$ (respectively, $L_{\infty}$) norm.  
Let $C$ be a convex object containing the origin in its interior.
 We translate $C$ by vector $x$ and consider the ray from $x$ through $y$. Let $v$ denote
the unique point on the boundary of $C$ hit by this ray.
The function $d_{C}(x,y)$, induced by the object $C$,
is defined as $d_{C}(x,y)=\frac{d(x,y)}{d(x,v)}$~\cite{IckingKLM95}. 
Note that the function $d_{C}$ is a metric (distance function) when $C$ is a centrally symmetric convex object, and in general, the following property is well-known. 
\begin{property}\label{prop:conv}
    Let $C$ be any convex object and  $x, y$ be any two points in $\IR^d$. Then, $d_C(x,y)=d_{-C}(y,x)$. 
\end{property}
 \begin{figure}[htbp]
    \centering
    \includegraphics[width=40 mm]{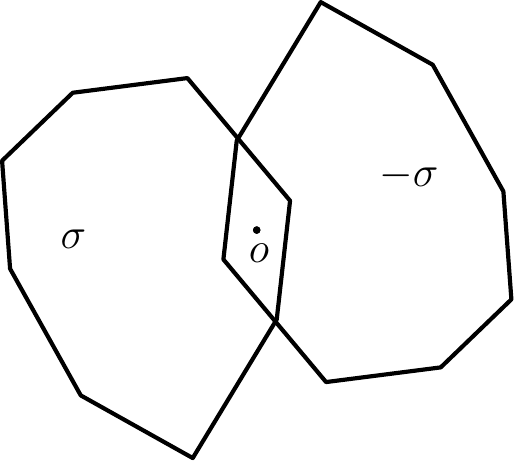}
    \caption{Reflection of an object $\sigma$ through the point $o$.}
    \label{fig:ref_1}
\end{figure}

\subsection{Lower Bound}\label{Lower}
%\subsection{For Intervals}
First, we observe that the competitive ratio of the piercing set problem has a pessimistic lower bound of $\Omega(n)$, for one-dimensional intervals.
\begin{theorem}\label{lower_bd_int}
The competitive ratio of every deterministic online algorithm for piercing intervals in $\IR$ is at least~$\Omega(n)$, where $n$ is the length of the input sequence.
\end{theorem}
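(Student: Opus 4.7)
The plan is a standard nested-interval adversary argument. We build the input sequence $\sigma_1,\sigma_2,\ldots,\sigma_n$ adaptively so that the intervals form a strictly decreasing chain $\sigma_1\supsetneq\sigma_2\supsetneq\cdots\supsetneq\sigma_n$, yet each $\sigma_i$ forces the deterministic algorithm to insert a brand new piercing point.

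Initialize $\sigma_1=[0,1]$. After the online algorithm \textsf{ALG} responds to $\sigma_i$, let $p_i\in\sigma_i$ denote the point it added (if the existing piercing set already covers $\sigma_i$, $p_i$ is the piercing point lying inside $\sigma_i$). The point $p_i$ splits $\sigma_i$ into two subintervals; the adversary picks the longer one, call it $J$, and chooses $\sigma_{i+1}$ to be a closed subinterval of $J$ with $\sigma_{i+1}\subsetneq\sigma_i$ and $p_i\notin\sigma_{i+1}$ (this is possible because $J$ has positive length and $p_i$ is one of its endpoints). By nestedness, $p_j\notin\sigma_{i+1}$ for every $j\le i$ as well, since $\sigma_{i+1}\subseteq\sigma_{j+1}$ and $p_j\notin\sigma_{j+1}$ by construction. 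Hence \textsf{ALG} is forced to add a genuinely new point $p_{i+1}$ at step $i+1$, giving $\mathrm{ALG}_\beta=n$ after $n$ rounds.

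For the offline optimum, observe that the nested chain $\sigma_1\supsetneq\cdots\supsetneq\sigma_n$ has a common point (any point of $\sigma_n$ works), so $\mathrm{OPT}_\beta=1$. The ratio is therefore $n$, yielding the claimed $\Omega(n)$ lower bound.

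The one technical obstacle worth handling carefully is ensuring the adversary's construction never ``runs out of room'', i.e., that $\sigma_{i+1}$ can always be chosen as a strictly smaller closed subinterval avoiding $p_i$. This is managed by picking, say, $\sigma_{i+1}$ to have length equal to one third of the length of the longer half of $\sigma_i\setminus\{p_i\}$ and placed in its interior; this guarantees both the strict containment and the exclusion of $p_i$, and lets the recursion continue for any prescribed $n$. No other step is delicate.
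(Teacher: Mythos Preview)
Your proof is correct and follows essentially the same nested-interval adversary argument as the paper: in each round the adversary selects the larger piece of the current interval not containing the algorithm's piercing point and places the next interval inside it, forcing $n$ piercing points while a single point of $\sigma_n$ pierces all intervals. Your extra care about never ``running out of room'' is a nice touch the paper leaves implicit.
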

\begin{proof}
Let $\sigma_1$ be the first interval presented by the adversary to the online algorithm. 
Let $p_1$ be a point placed by the online algorithm 
to pierce the interval $\sigma_1$. The point $p_1$ partitions the interval $\sigma_1$ into two parts, of which, let $\sigma_1^{L}$ be a larger part that does not contain the point $p_1$. Now, the adversary can place an interval $\sigma_2$ completely contained in  $\sigma_1^{L}$. %(see Figure~\ref{fig:my_label_1}).
For the new interval $\sigma_2$, any online algorithm needs a new piercing point $p_2$.
Now again, one can  define a partition $\sigma_2^{L}$ of $\sigma_2$ depending on the position of the point $p_2$ such that $\sigma_2^{L}$  does not contain $p_2$, and the adversary will place an interval $\sigma_3$ completely contained in $\sigma_2^{L}$. In this way, the adversary can adaptively construct $n$ intervals for which any online algorithm needs $n$ distinct points to pierce, while an offline optimum needs only one point. Hence, the lower bound of the competitive ratio is $\Omega(n)$.
\end{proof}

\subsection{Unit Covering vs Unit Piercing }\label{Unit_cover}
\begin{definition}[Unit Piercing Problem]
For any $d\in\mathbb{Z}^{+}$, given a family $\s$ of translated copies of a convex object $C\subset\IR^d$, in the unit piercing problem, we need to pierce each  object in $\s$ by placing the minimum number of points in $\IR^d$.
\end{definition}

\begin{definition}[Unit Covering Problem]
For any $d\in\mathbb{Z}^{+}$, given a set of points $\P \subseteq \IR^d$, in the unit covering problem, we need to place the minimum number of translated copies of a convex object $-C\subset\IR^d$ to cover all the points in $\P$.
\end{definition}
The following theorem connects the above two problems.
\begin{theorem}\label{equivalence}
The unit piercing problem is equivalent to the unit covering problem.
\end{theorem}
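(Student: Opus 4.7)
The plan is to establish the equivalence by producing a bijection between feasible solutions of the two problems that preserves cardinality. The whole argument rests on the elementary translation identity
$$ p \in C + t \iff t \in (-C) + p, $$
which follows directly from the definitions: $p \in C + t$ means $p - t \in C$, equivalently $-(p-t) = t - p \in -C$, equivalently $t \in p + (-C) = (-C) + p$. Here $-C = \{-x : x \in C\}$ is the reflection of $C$ through the origin, which by the preliminaries is a particular translated copy of $-\sigma(c)$ for any chosen $c$, so it is consistent with the paper's use of $-C$.

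From piercing to covering: given an instance $\s = \{C + t_1, \ldots, C + t_n\}$ of unit piercing, I would define the point set $\P = \{t_1, \ldots, t_n\} \subseteq \IR^d$ as the corresponding unit covering instance (with covering objects being translates of $-C$). By the identity above, a point $p$ pierces $C + t_i$ if and only if the translated copy $(-C) + p$ of $-C$ contains $t_i$. Hence a set $\{p_1, \ldots, p_k\}$ pierces every object in $\s$ if and only if the collection $\{(-C) + p_1, \ldots, (-C) + p_k\}$ covers every point of $\P$. In particular, the two optima coincide, and optimal (or any feasible) solutions can be converted back and forth with the same cardinality.

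From covering to piercing: given a unit covering instance with points $\P = \{p_1, \ldots, p_n\}$ and translates of $-C$, I would form the unit piercing instance $\s = \{C + p_1, \ldots, C + p_n\}$ and apply the same identity in the other direction to get the inverse cardinality-preserving correspondence. Together, these two reductions give the equivalence. The only point that requires a little care is bookkeeping, namely making sure that the reflection is placed on the right side (the "point versus object" swap in translation space is what turns $C$ into $-C$); there is no substantive obstacle, and convexity of $C$ is not used in the bijection itself but is simply inherited from the problem statements.
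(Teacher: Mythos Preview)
Your proposal is correct and follows essentially the same approach as the paper: both reduce the equivalence to the identity $x \in y + C \iff y \in x + (-C)$ and then translate piercing points to covering-object centers (and vice versa). The only minor difference is that the paper derives this identity via the convex distance function (Property~\ref{prop:conv} and Lemma~\ref{lemma_d_C}), whereas you prove it directly from the definition of $-C$; your route is slightly more elementary and, as you correctly note, does not actually need convexity for the bijection itself.
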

To prove this, first, we demonstrate the following lemma.
\begin{lemma}\label{lemma_d_C}
Let $x$ and $y$ be any two points in $\IR^d$. Then $x$ lies in $y+C$ if and only if $y$ lies in $x+(-C)$.
\end{lemma}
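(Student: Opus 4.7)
The plan is to unwind the definitions of Minkowski sum and reflection and verify the equivalence by direct algebraic manipulation, without actually using convexity of $C$. Write $y+C=\{y+c \mid c\in C\}$ and $-C=\{-c \mid c\in C\}$ (taking the reflection of $C$ through the origin, consistent with the definition of $-\sigma(c)$ in Section~\ref{Notations} specialized to $c=0$), so that $x+(-C)=\{x-c \mid c\in C\}$.

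For the forward direction, I would assume $x\in y+C$, which by definition gives $x=y+c$ for some $c\in C$. Rearranging yields $y=x-c$, and since $c\in C$ implies $-c\in -C$, this exhibits $y$ as an element of $x+(-C)$. For the reverse direction, I would reverse the argument: if $y\in x+(-C)$, then $y=x+c'$ for some $c'\in -C$; writing $c'=-c$ with $c\in C$ gives $y=x-c$, hence $x=y+c\in y+C$.

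Both implications are essentially one-line calculations, so there is no real obstacle. The only thing to be careful about is making the convention on $-C$ explicit, since earlier the paper defines $-\sigma$ only up to translation; I would state at the outset that throughout this lemma $-C$ denotes the reflection of $C$ through the origin (i.e., $-C(0)$), which is the meaning needed for the unit covering problem as formulated. Once that convention is fixed, the equivalence follows purely from the definitions of Minkowski translation and pointwise negation.
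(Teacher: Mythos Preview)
Your proof is correct and is actually more elementary than the paper's. The paper argues via the convex distance function: from $x\in y+C$ it deduces $d_C(y,x)\le 1$, then invokes Property~\ref{prop:conv} to get $d_{-C}(x,y)\le 1$, hence $y\in x+(-C)$. Your argument bypasses this machinery entirely by unwinding the Minkowski sum directly, which has two advantages: it does not rely on convexity of $C$ (so the lemma holds for arbitrary sets), and it avoids the implicit assumption that $C$ contains the origin, which is needed for $d_C$ to be defined. The paper's route, on the other hand, fits naturally with the convex-distance framework set up in Section~\ref{Notations} and makes explicit why Property~\ref{prop:conv} is recorded there. Your care in fixing $-C$ to mean reflection through the origin is appropriate and matches what the paper implicitly uses.
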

\begin{proof}
If  $x$ lies in $y+C$, then $d_C(y,x) \leq 1$. Due to Property~\ref{prop:conv}, $d_{-C}(x,y) =d_C(y,x)\leq 1$. Therefore, $y$ lies in $x+(-C)$. Similarly, one can prove the converse. Hence, the lemma follows.
{An alternative geometry-based proof is as follows.
Let $x$ lies in $y+C$.
Note that $x$ lies in $y+C$ if and only if $x-y$ lies in $C$ if and only if $y-x$ lies in $-C$ if and only if $y$ lies in $x+(-C)$. The first and third equivalences are easy to follow; the second equivalence follows by reflecting on the origin.}
\end{proof}

\noindent
\textit{Proof of Theorem~\ref{equivalence}.}
For the unit piercing problem, let us assume that each unit object in $\cal S$ is a translated copy of $C\subset \IR^d$.  Due to Lemma~\ref{lemma_d_C}, we know that some point $x\in \IR^d$ pierces an object $y+C$ if and only if the object $x+(-C)$ covers the point $y$. Thus,  we can convert this problem to an equivalent unit covering problem, where the set $\P$ of points are the center of all the objects in $\cal S$, and we need to cover them by translates of $(-C)$. 
In a similar fashion, one can prove the other side of the equivalence. \myqed

As a consequence of Theorem~\ref{equivalence}, all the results related to the online unit covering problem
(summarized in Table~\ref{table_contri}) studied in~\cite{ChanZ09, CharikarCFM04, DumitrescuGT20, DumitrescuT22} are carried for the online piercing problem.
Specifically, due to~\cite[Theorem~4]{DumitrescuGT20}, we have the following lower bound result for centrally symmetric convex objects.  A proof adapted from~[ibid.], which might be of independent interest, is given for the sake of completeness.
\begin{theorem}\label{illum}
The competitive ratio of every deterministic online algorithm for piercing a set of translated copies of a centrally symmetric convex object $C$ is at least $I(C)$, where $I(C)$ denotes the illumination number of $C$.
\end{theorem}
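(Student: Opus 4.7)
The plan is to invoke Theorem~\ref{equivalence} and work in the equivalent online unit covering formulation. Because $C$ is centrally symmetric, $-C$ is itself a translate of $C$, so covering an online sequence of points by translates of $-C$ is, up to a fixed translation, the same as covering by translates of $C$. Hence it suffices to exhibit an adversarial sequence of points in $\IR^d$ that fits inside a single translate of $C$ (so $\opt=1$) but forces any deterministic online algorithm to open at least $I(C)$ distinct cover sets, each a translate of $C$.

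I would use an adaptive ``zoom-in'' adversary in the spirit of \cite{DumitrescuGT20}. The adversary maintains a nested family of convex targets $T_0\supseteq T_1\supseteq\cdots\supseteq T_{I(C)-1}$, each a positive homothet of $C$ (so each contained in a common translate of $C$). In round $i$ she reveals a point $q_i\in T_{i-1}$; the algorithm responds with a cover set $C_i$ that is some translate of $C$ containing $q_i$; she then chooses $T_i\subseteq T_{i-1}\setminus C_i$ to again be a positive homothet of $C$. After $I(C)$ rounds, all revealed points lie in $T_0$, so one translate of $C$ covers the entire input offline, whereas the algorithm was forced to open $I(C)$ distinct covers, since each $q_i\notin C_1\cup\cdots\cup C_{i-1}$ by construction.

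The main obstacle is showing that a valid shrinking target $T_i$ always exists while $i<I(C)$, that is, that inside any homothet $T_{i-1}$ of $C$ one can still fit a smaller homothet of $C$ disjoint from whatever cover set $C_i$ the algorithm chooses. This is where the definition of the illumination number is used. By the characterization of $I(C)$, inside $T_{i-1}$ there are $I(C)$ ``illumination corners'' near its boundary such that any translate of $C$ containing a designated interior point of $T_{i-1}$ must miss at least one such corner; the adversary reveals $q_i$ at a corner chosen so that any translate $C_i$ containing $q_i$ leaves a sufficiently rich piece of $T_{i-1}$ uncovered to contain a further smaller homothetic copy of $C$, which becomes $T_i$. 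Making the ``smaller homothetic copy remains'' step quantitative---bounding the contraction factor between consecutive $T_i$ so that all $I(C)$ homothets are non-degenerate---is the technical heart of the argument, and can be carried out by a continuity and compactness argument on the finite set of illumination directions of $C$.
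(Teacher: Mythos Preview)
Your reduction to unit covering via Theorem~\ref{equivalence} is sound, but the shrinking-homothet adversary does not work as stated, and the gap is not merely a matter of bookkeeping. By your own setup each $T_{i-1}$ is a homothet of $C$ of scale at most $1$ (it must fit in a single translate of $C$ so that $\opt=1$), whereas every cover $C_i$ is a \emph{full-size} translate of $C$. Hence a single $C_i$ can swallow all of $T_{i-1}$. Your key assertion---that for a suitably chosen $q_i\in T_{i-1}$, \emph{every} translate $C_i\ni q_i$ leaves a homothet of $C$ inside $T_{i-1}\setminus C_i$---is therefore false already at step $i=1$. Concretely, for $C=[-1,1]\subset\IR$ (so $I(C)=2$) and $T_0=[-1,1]$, whichever $q_1\in T_0$ you reveal, the response $C_1=[-1,1]$ is legal and leaves $T_0\setminus C_1=\emptyset$; your construction stalls after one round instead of two. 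The invocation of the illumination number here points in the wrong direction: $I(C)$ lower-bounds the number of \emph{strictly smaller} homothets needed to cover $C$, and says nothing about one full-size translate covering a small homothet of $C$. No ``continuity and compactness'' argument can repair this, because the obstruction is not quantitative but structural.

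The paper's proof (adapted from~\cite{DumitrescuGT20}) avoids this by never shrinking the target. It maintains a \emph{moving} full-size translate $C+v_k$ that contains all centers $c_1,\dots,c_k$ with positive slack $\epsilon_k>0$. The $k<I(C)$ translates $P_1,\dots,P_k$ of $C$ placed so far cannot cover the enlargement $(1+\epsilon_k)C+v_k$: otherwise scaling by $(1+\epsilon_k)^{-1}$ would exhibit $k$ strictly smaller homothets covering a translate of $C$, contradicting the definition of $I(C)$. Any uncovered point $c_{k+1}\in(1+\epsilon_k)C+v_k$ then serves as the next center, and a shift of the target by at most $\epsilon_k$ keeps all $k+1$ centers inside one translate of $C$. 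The essential difference from your sketch is that the illumination bound is applied to the union of \emph{all} covers placed so far against a target of scale $1+\epsilon_k>1$, not to a single cover against a shrinking target of scale at most~$1$.
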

\begin{figure}[htbp]
  \centering
     \begin{subfigure}[b]{0.45\textwidth}
          \centering
        \includegraphics[width=60mm]{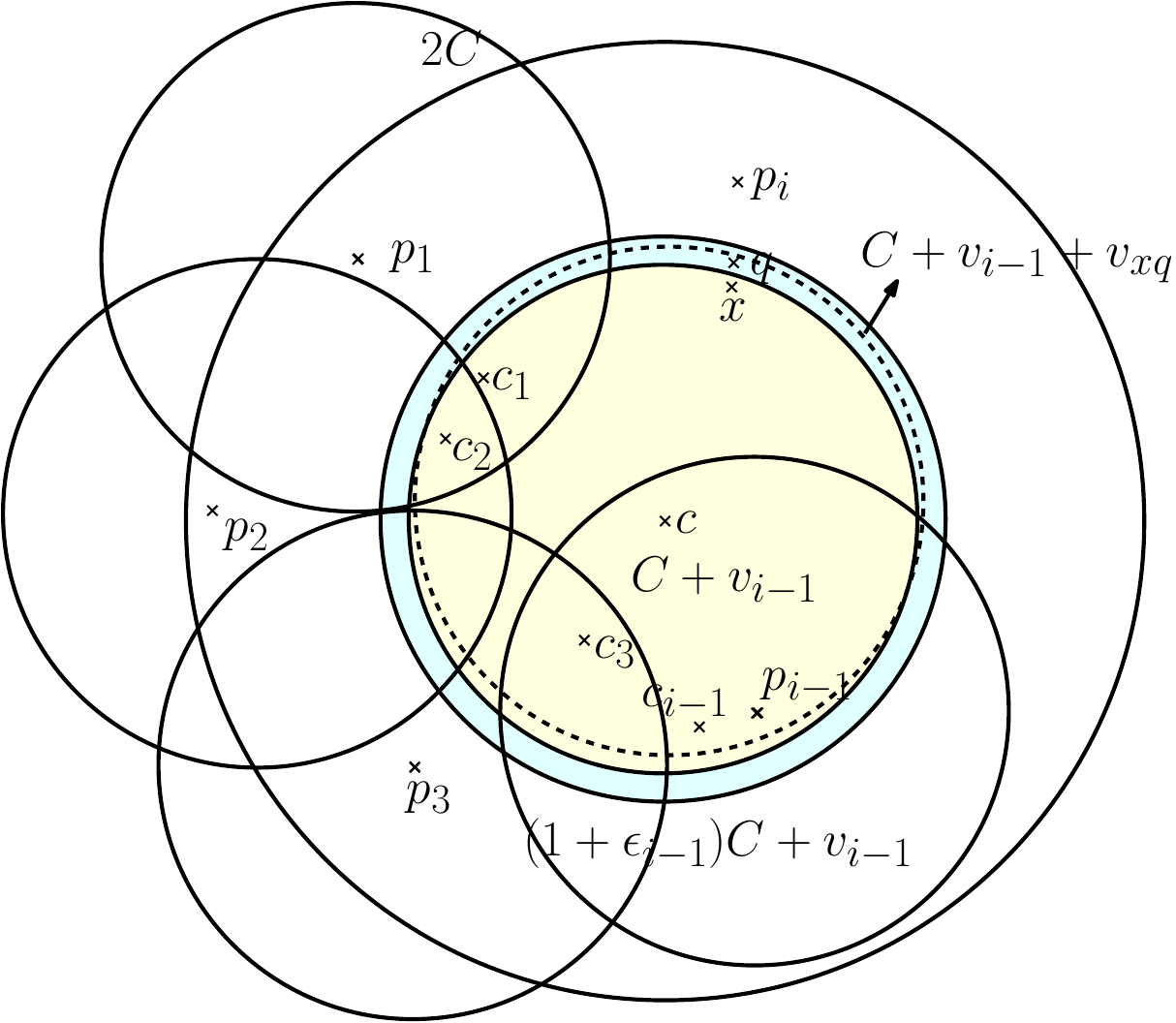}
    \caption{}
    \label{fig:illum_piercing}
     \end{subfigure}
      \hfill
       \begin{subfigure}[b]{0.45\textwidth}
    \centering
    \includegraphics[width= 60 mm]{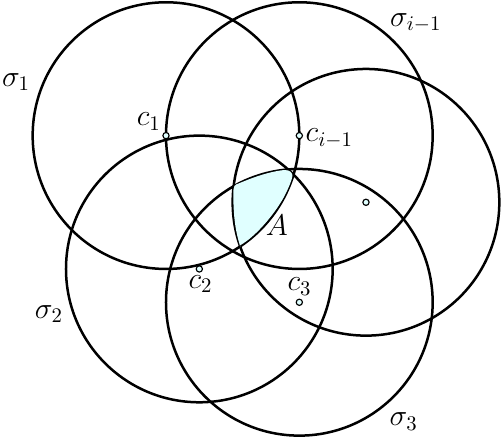}
    \caption{}
    \label{fig:intersecting}
\end{subfigure}
 \caption{(a) A centrally symmetric convex object $C+v_{i-1}$ (colored yellow) contains points $c_1,c_2,\ldots,c_{i-1}$. Note that $(1 + \epsilon_{i-1})C + v_{i-1}$ is $\epsilon_{i-1}$-neighbourhood of $C+v_{i-1}$. The distance $d_C(q,x)<\epsilon_{i-1}$, where $q \in \interior((1 + \epsilon_{i-1})C + v_{i-1})$ and $x \in \interior(C + v_{i-1})$. The translate $C+v_{i-1}+v_{xq}$ (in dotted lines) contains $c_1,c_2,\ldots,c_{i-1}$ and $q$. (b) Cyan colored region, denotes the intersection region $A$, i.e., $\cap_{k=1}^{i-1} \sigma_k$.}
\end{figure}

\begin{proof}
To prove the lower bound, we can think of a game between Alice and Bob. Here, Alice plays the role of the adversary, and Bob plays the part of the online algorithm. Alice presents a translated copy $\sigma_i$ of $C$ to Bob, and  Bob needs to place a piercing point $p_i$ to pierce $\sigma_i$ if some previously placed piercing points do not already pierce it, for $i\in [I(C)]$.
We claim that Alice can find an object $\sigma_i$ centered at $c_i$ such that the following invariants hold.
\begin{itemize}
    \item[(I)] The object $\sigma_i$ is not pierced by any of the previously placed piercing point $p_j$, where {$j\in[i-1]$}.
    \item[(II)] The set of objects $\{\sigma_1,\sigma_2,\ldots,\sigma_i\}$ can be pierced by a single point. In other words, $\bigcap^i_{k=1}\sigma_k\neq \emptyset$.
\end{itemize}
Invariant (I) implies that Bob is forced to place $I(C)$ many piercing points, while invariant (II) ensures that the set of objects $\{\sigma_1,\sigma_2,\ldots,\sigma_i\}$ can be pierced by a single point. Consequently, an offline optimum needs just one piercing point, while any online algorithm needs $I(C)$ many piercing points to pierce $\sigma_1,\sigma_2,\ldots,\sigma_{I(C)}$. Hence, the competitive ratio of any deterministic online algorithm is at least $I(C)$.
We prove this claim by induction {on $i$}.  For $i=1$, both invariants (I) and (II) trivially hold (Base case).
For $i\in\{1,2,\ldots I(C)-1\}$, assume that both invariants  hold (Induction hypothesis).
Now it is enough to show that Alice can find an object $\sigma_{i}$ centered at $c_{i}$ such that both invariants hold for $i=I(C)$.
\begin{clm}\label{claim:Conatin}
Let $\sigma_1,\sigma_2,\ldots, \sigma_{i-1}$ be translated copies of a centrally symmetric convex object $C$, centered at $c_1,c_2, \ldots,c_{i-1}$, respectively. Here, $\cap_{j=1}^{i-1} \sigma_j$ is nonempty if and only if there exists a translating vector $v_{i-1}$ such that $C+v_{i-1}$  contains all the centers $c_1,\ldots, c_{i-1}$.
\end{clm}
\begin{proof}
Let us assume that $\cap_{j=1}^{i-1} \sigma_j$ is nonempty, and let $x$ be any point in $\cap_{j=1}^{i-1} \sigma_j$ (see Figure~\ref{fig:intersecting}). For each $j\in[i-1]$, the $d_C$ distance of $x$ to all  centers $c_j$ is less than one, i.e., $d_C(x,c_j)< 1$. Therefore, a translate of $C$, centered at $x$, will contain all the centers $c_1,c_2,\ldots,c_{i-1}$ of the objects $\sigma_1,\sigma_2,\ldots,\sigma_{i-1}$, respectively.
For the converse, let us assume that a translate of $C$ (say $C+v_{i-1}$), centered at $c$, contains all the centers $c_1,\ldots, c_{i-1}$.  Here, for each $j\in [i-1]$, $d_C(c,c_j)$ is less than one. Since $C$ is a centrally symmetric object, for each $j\in [i-1]$, $d_C(c_j,c)$ is also less than one. As a result, the point $c$ lies in $\sigma_j$, for each $j\in [i-1]$.  Therefore, $\cap_{j=1}^{i-1} \sigma_j$ is nonempty.
\end{proof}
 Let $C+v_{i-1}$ be a translate of $C$ containing all the centers $c_1,c_2,\ldots,c_{i-1}$.
 Let $\epsilon_{i-1}$ be the largest value such that translates of $\epsilon_{i-1}C$ centered at $c_1,c_2,\ldots,c_{i-1}$ are contained in $C + v_{i-1}$, where $v_{i-1}\in \IR^d$ is a translation vector (see Figure~\ref{fig:illum_piercing}).
Note that $(1+\epsilon_{i-1})C+v_{i-1}$ is the $\epsilon_{i-1}$-neighbourhood of $C+v_{i-1}$ in the $L_C$ norm.

Since $p_j$ is a piercing point for $\sigma_j$, the $d_C$ distance between the center $c_j$ and the piercing point $p_j$ for the object $\sigma_j$ is at most one, i.e., $d_C(c_j,p_j)\leq 1$, for each $j\in[i-1]$. Since $d_C(p_j,c_j)\leq1$, so all the piercing points $p_1,p_2,\ldots,p_{i-1}$ are going to lie in the $C$-neighbourhood of $C$ (see Figure~\ref{fig:illum_piercing}).
\begin{clm}\label{cliam:surety}
Let $P_1,P_2,\ldots,P_{i-1}$ be translates of $C$, centered at $p_1,p_2,\ldots,p_{i-1}$, respectively. Then $\cup_{j=1}^{i-1} P_j$ will not entirely cover $(1+\epsilon_{i-1})C+v_{i-1}$.
\end{clm}
\begin{proof}
If  $\cup_{j=1}^{i-1} P_j$  covers the entire region $(1+\epsilon_{i-1})C+v_{i-1}$, then it contradicts the  definition of illumination number. %(see Definition~\ref{defn:illum}).
\end{proof}

\noindent
Due to Claim~\ref{cliam:surety}, we can place a point $c_{i}$ (center of the object $\sigma_{i}$) in $\interior((1+\epsilon_{i-1})C+v_{i-1})$ such that $d_C(p_j,c_{i})>1$, for all $j\in[i-1]$.
Now, if we can show that some translate of $C$  covers $c_{i}$ along with all the previous $c_j$'s, where $j \in [i-1]$,  then we are done.
\begin{clm}\label{claim:new_q}
For any point $q \in (1+\epsilon_{i-1})C+v_{i-1}$, there exists some translate of $C$ $($say $\sigma_{new})$ such that the interior of $\sigma_{new}$ contains all centers $\{c_1,\dots,c_{i-1},q\}$.
\end{clm}
\begin{proof}
 If $q \in \interior(C+v_{i-1})$ (colored yellow, see Figure~\ref{fig:illum_piercing}), then $C+v_{i-1}$ contains all the previous centers $c_1,\ldots, c_{i-1}$ as well as $q$. 
    % From Claim~\ref{claim:Conatin}, $\cap_{i=1}^k \sigma_i \cap O$ is also nonempty.
    Without loss of generality, assume that $q$ lies in the annular region $\interior((1+\epsilon_{i-1})C+v_{i-1})\setminus \interior(C+v_{i-1})$ (colored blue, see Figure~\ref{fig:illum_piercing}).
It is easy to see that there exists a point $x\in \interior(C+v_{i-1})$ such that the vector $v_{xq}=q-x$ is less than $\epsilon_{i-1}$ (under the $L_C$ norm). Note that $q\in \interior(C+v_{i-1}+v_{xq})$ (see Figure~\ref{fig:illum_piercing}).
 
By choice of $\epsilon_{i-1} > 0$, the $d_C$ distance between the points
$c_1,c_2,\ldots,c_{i-1}$ and $\partial(C + v_{i-1})$ is at least $\epsilon_{i-1}$, and $v_{xq}$  is also less than $\epsilon_{i-1}$. It implies that the $\interior(C+v_{i-1}+v_{xq})$ contains the centers $c_1,c_2,\ldots,c_{i-1}$ as well as $q$. 
\end{proof}
By the above Claim~\ref{claim:new_q}, one can find a translate of $C$ such that it covers the point $c_{i}$ along with all the previous points $c_j$, where $j \in [i-1]$. 
From Claim~\ref{claim:Conatin}, $\cap_{j=1}^{i} \sigma_j$ is also nonempty. Hence, the theorem follows.
\end{proof}

\noindent

It is known that $I(C) =2^d$ for any full-dimensional parallelepiped in $\IR^d$~\cite{DumitrescuGT20,Pach}. Consequently, the value of $I(C)=2^d$, for hypercubes in $\IR^d$. Therefore, we have the following.

\begin{corollary}\label{corr_illum}
The competitive ratio of every deterministic online algorithm for piercing translates of a hypercube in $\IR^d$ is at least $2^d$.
\end{corollary}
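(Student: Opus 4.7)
The plan is to simply chain together the two facts stated immediately before the corollary. Since the corollary is positioned right after Theorem~\ref{illum} and the remark that $I(C) = 2^d$ for any full-dimensional parallelepiped (in particular for a hypercube), essentially no new work is required --- the corollary is meant as a direct instantiation.

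First, I would verify that an axis-parallel unit hypercube $C = [0,1]^d$ satisfies the hypotheses of Theorem~\ref{illum}. It is clearly convex and compact with nonempty interior, and it is centrally symmetric about its center $c = (\tfrac12, \ldots, \tfrac12)$: for every $x \in C$, the reflection $2c - x$ also lies in $C$, so $C = -C(c)$ as required by the definition in Section~\ref{Notations}. Moreover, any family of axis-parallel unit hypercubes is precisely a family of translated copies of $C$.

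Second, I would invoke Theorem~\ref{illum} with this $C$, which immediately gives that the competitive ratio of any deterministic online algorithm for piercing axis-parallel unit hypercubes in $\IR^d$ is at least $I(C)$. Finally, I would apply the cited fact $I(C) = 2^d$ for parallelepipeds~\cite{DumitrescuGT20, Pach}, of which the hypercube is a special case, to conclude the bound of $2^d$.

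There is no genuine obstacle here: the only point one might want to double-check is that the definitions of ``centrally symmetric'' and ``translated copy'' used in Theorem~\ref{illum} are exactly the ones that the hypercube family satisfies, but this is immediate from Section~\ref{Notations}. The corollary is essentially a one-line consequence of Theorem~\ref{illum} and the standard illumination number of the cube.
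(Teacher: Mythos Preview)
Your proposal is correct and matches the paper's approach exactly: the corollary is presented there as an immediate instantiation of Theorem~\ref{illum} together with the cited fact that $I(C)=2^d$ for full-dimensional parallelepipeds (hence for the hypercube). No additional argument is given or needed.
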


\begin{table}[htbp]
\centering
\begin{tabular}{|p{6 cm}|p{3cm}|p{3cm}|}
\hline 
  Geometric Objects &  Lower Bound & Upper Bound \\
\hline
\hline
Translated copies of an interval & $2$~\cite{CharikarCFM04} & $2$~\cite{ChanZ09}\\
 \hline
Translated copies of a  square &  $4$~\cite{DumitrescuGT20, DumitrescuT22} & $4$~\cite{ChanZ09} \\
  \hline
Translated copies of a  hypercube in $\IR^d$ &  $2^d$~\cite{ DumitrescuT22} & $2^d$~\cite{ChanZ09}\\
\hline
Congruent disks & $4$~\cite{DumitrescuGT20} & $5$~\cite{DumitrescuGT20} \\
\hline
Congruent balls in $\IR^3$ & $5$~\cite{DumitrescuGT20} & $12$~\cite{DumitrescuGT20} \\
\hline
Congruent balls in $\IR^d$, $d>3$ & $d+1$~\cite{DumitrescuGT20} & $O({1.321}^d)$~\cite{DumitrescuGT20} \\
\hline
  Translated copies of a centrally symmetric convex object $C\subset \IR^d$  & $I(C)$~\cite{DumitrescuGT20} & 
  $\bigstar$~[Corollary~\ref{lem:unit}]\\
\hline
\end{tabular}
  \begin{flushleft}
  \item \small{$\bigstar$ \text{Result obtained in this paper.}}
 \end{flushleft}
\caption{Summary of known results for the online piercing set problem.}
\label{table_contri}
\end{table}
\normalsize

\subsection{Fat Object}\label{bounded}
A number of different definitions of fatness (not extremely long and skinny) are available in the geometry literature~\cite{Chan03}. For example, see many references in~\cite{BergSVK02, EfratKNS00}. For our purpose, we give the following appropriate definition.

Let $\sigma$ be an object and $x$ be any point in $\sigma$. Let $\alpha(x)$ be the ratio between the minimum and maximum distance (under the $L_2$ norm) from $x$ to the boundary $\delta(\sigma)$ of the object $\sigma$. In other words, $\alpha(x)= \frac{\min_{y\in \delta(\sigma)}d(x,y)}{\max_{y\in \delta(\sigma)}d(x,y)}$, where $d(x,y)$ is the distance between $x$ and $y$.
The \emph{aspect ratio} $\alpha(\sigma)$ of an object $\sigma$ is defined as the maximum value  of $\alpha(x)$ for any point  $x\in \sigma$, i.e., $\alpha(\sigma)=\max \{\alpha(x)  : x \in  \sigma\}$. An object is said to be an \emph{$\alpha$-fat object} if its aspect ratio is exactly $\alpha$. 
A point $c\in \sigma$ with $\alpha(c)=\alpha(\sigma)$ is defined as a \emph{center} of the object $\sigma$ (see Figure~\ref{fig:aspect_2}).  Note that the center of an $\alpha$-fat object might not be unique. For an example, see Figure~\ref{fig:not-unique}. If an object $\sigma$ has multiple points satisfying the property of a center, then for our purpose, we can arbitrarily choose any one of them as the center of $\sigma$. The minimum (respectively, maximum) distance from  the center to the boundary of the object is referred to as the \emph{width} (respectively, \emph{height}) of the object. 

Note that for any object $\sigma$, we have $0<\alpha(\sigma)\leq 1$. The maximum possible value of $\alpha$ is attained when the object is a ball in $\IR^d$, while the value of $\alpha$ is $\frac{1}{\sqrt{d}}$ when the object is a hypercube in $\IR^d$. We say that a set $\cal S$  of objects is \emph{fat} if there exists a constant $0<\alpha\leq 1$ such that each object in $\cal S$ is $\alpha$-fat. For a set $\cal S$ of fat objects,  each object $\sigma\in {\cal S}$ need not be convex,  and it does not need to be connected.
The set $\cal S$ of $\alpha$-fat objects is said to be \emph{similarly sized fat objects} when the ratio of the largest width of an object in $\cal S$ to the smallest width of an object in $\cal S$  is bounded by a fixed constant.
% A set $\cal S$  is said to be \emph{similarly sized fat objects} when the ratio of the largest width of an object in $\cal S$ to the smallest width of an object in $\cal S$  is bounded by a fixed constant.
\begin{figure}[htbp]
  \centering
  \begin{subfigure}[b]{0.33\textwidth}
          \centering
        \includegraphics[width=50 mm]{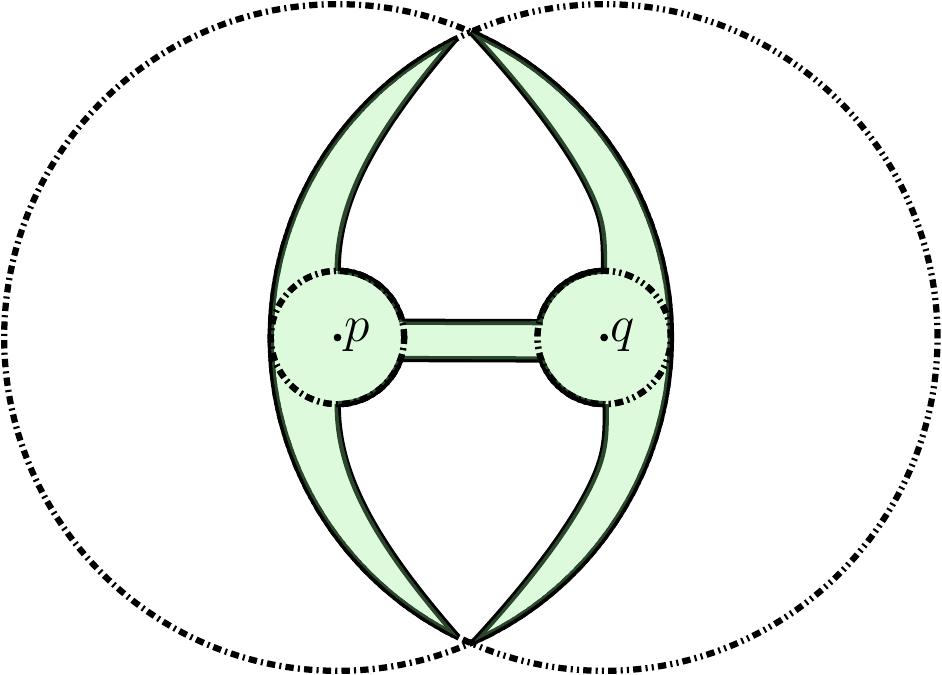}
    \caption{}
    \label{fig:not-unique}
     \end{subfigure}
     \hfill
    \begin{subfigure}[b]{0.32\textwidth}
          \centering
        \includegraphics[width=35mm]{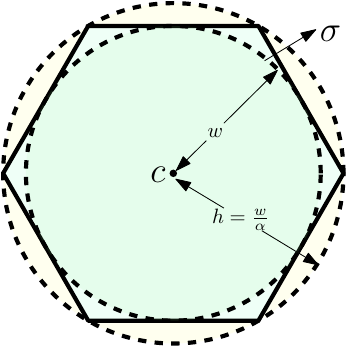}
    \caption{}
    \label{fig:aspect_2}
     \end{subfigure}
     \hfill
      \begin{subfigure}[b]{0.33\textwidth}
          \centering
    \includegraphics[width=35mm]{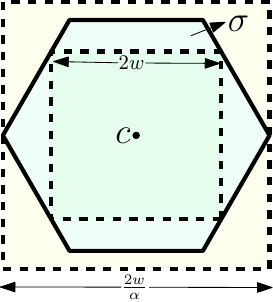}
    \caption{ }
    \label{fig:inf}
     \end{subfigure}
 \caption{(a)
 The object colored green has two centers at $p$ and $q$. (b-c) Geometric interpretation of aspect ratio and aspect$_{\infty}$ ratio, respectively.}
\end{figure}

Observe the following properties of an $\alpha$-fat object.
\begin{property}\label{prop1}
Let $\sigma\subset \IR^d$ be an $\alpha$-fat object, then 
\begin{itemize}
    \item any rotated copy $\sigma'$ of $\sigma$ is  an $\alpha$-fat object;
    \item any translated copy $\sigma'$ of $\sigma$ is  an $\alpha$-fat object;
    \item any reflected copy $-\sigma$ of $\sigma$ is  an $\alpha$-fat object;
    \item the scaled copy $\lambda \sigma$ is  an $\alpha$-fat object, where $\lambda\in \IR^{+}$.
\end{itemize}
\end{property}

\begin{property} \label{prop1.2}
Let $\sigma\subset \IR^d$ be an $\alpha$-fat object  centered at a point $c$ with width $w$, then

\begin{itemize}
\item a ball  of radius $w$ centered at $c$  is completely contained in $\sigma$, and
\item a ball  of radius $\frac{w}{\alpha}$ centered at $c$  contains the object $\sigma$. 
\end{itemize}
\end{property}

Considering the $L_{\infty}$ norm instead of the $L_2$ norm, similar to the above,
one can  define the aspect$_\infty$ ratio, center, width and  height  of an object. An object with aspect$_\infty$ ratio $\alpha$ is said to be an 
 \emph{ $\alpha$-aspect${_\infty}$  fat object} (see Figure~\ref{fig:inf}). 
  Note that for any object $\sigma$, the value of $\alpha_{\infty}(\sigma)$ is also strictly greater than zero  and the maximum possible value of $\alpha_{\infty}(\sigma)$ is attained for an axis-aligned hypercube. Analogous to similarly sized fat objects, we can define similarly sized aspect$_{\infty}$ fat objects. Similar to properties~\ref{prop1} and~\ref{prop1.2}, we have the following properties.

\begin{property}\label{prop2}
Let $\sigma\subset \IR^d$ be an $\alpha$-aspect${_\infty}$ fat object, then 
\begin{itemize}
    \item any translated copy $\sigma'$ of $\sigma$ is  an $\alpha$-aspect${_\infty}$ fat object;
    \item  any reflected copy $-\sigma$ of $\sigma$ is  an $\alpha$-aspect${_\infty}$ fat object;
    \item the scaled copy $\lambda \sigma$ is  an $\alpha$-aspect${_\infty}$ fat object, where $\lambda\in \IR^{+}$.
\end{itemize}
\end{property}

\begin{property} \label{prop2.2}
Let $\sigma\subset \IR^d$ be an $\alpha$-aspect${_\infty}$ fat object object centered at a point~$c$ with width~$w$, then

\begin{itemize}
\item a hypercube  of side length $2w$ centered at $c$  is completely contained in $\sigma$, and
\item a hypercube  of side length $\frac{2w}{\alpha}$ centered at $c$  contains the object $\sigma$. 
\end{itemize}
\end{property}

\section{Upper Bound}\label{upp}

In subsections~\ref{upp:inf} and \ref{ball-3-d}, we analyse the performance of $\AC$ for  piercing similarly sized $\alpha$-aspect$_{\infty}$ fat objects and $\alpha$-fat objects, respectively,  having width in the range  $[1,k]$. The analysis is similar in nature for both. To bound the competitive ratio, we determine the number of  points placed by $\AC$ against each point $p$ in an offline optimum. To compute the number of piercing points placed by our algorithm, we consider the region containing the centers of all objects that can be pierced by the point $p$. We have partitioned this region into $\lceil\log k\rceil$ annular regions such that $\AC$ places the same number of piercing points in each of these annular regions.  Finally, we give an upper bound on the total number of points placed by our algorithm in each of these annular regions. The competitive ratio is $\lceil\log k\rceil$  multiplied by this number.

\subsection{ For  {Similarly Sized} Aspect${_{\infty}}$ Fat Objects}\label{upp:inf}

This section presents an upper bound of the competitive ratio for piercing  similarly sized $\alpha$-aspect${_{\infty}}$ fat  objects in $\IR^d$.  In the proof of the following theorem, all the distances, if explicitly not mentioned, are under the $L_{\infty}$ norm, and all the hypercubes  are axis-parallel. 
\begin{theorem} \label{thm:alpha_inf}
For piercing similarly sized $\alpha$-aspect${_\infty}$ fat objects in $\IR^d$ having  width in the range  $[1,k]$, $\AC$ achieves a competitive ratio of at most~${\left(\Bigl\lceil2\left(1+\frac{1}{\alpha}\right)\Bigr\rceil^d-\Bigl\lfloor\frac{2}{\alpha}\Bigr\rfloor^d\right)} {\Bigl\lceil}\log_{1+\alpha}(\frac{2k}{\alpha}){\Bigr\rceil} +1$.
\end{theorem}
\begin{figure}[htbp]
    \centering
\begin{subfigure}[b]{0.48\textwidth}
          \centering
        \includegraphics[width=43 mm]{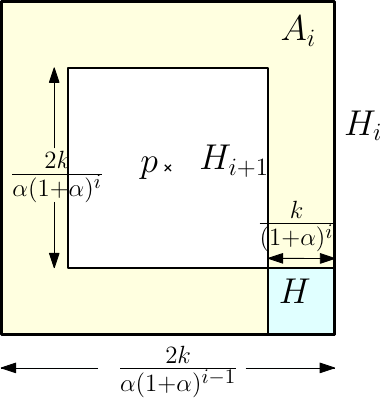}
    \caption{}
    \label{fig:sq_poly}
     \end{subfigure}
      \hfill
      \begin{subfigure}[b]{0.48\textwidth}
    \centering
    \includegraphics[width= 65mm]{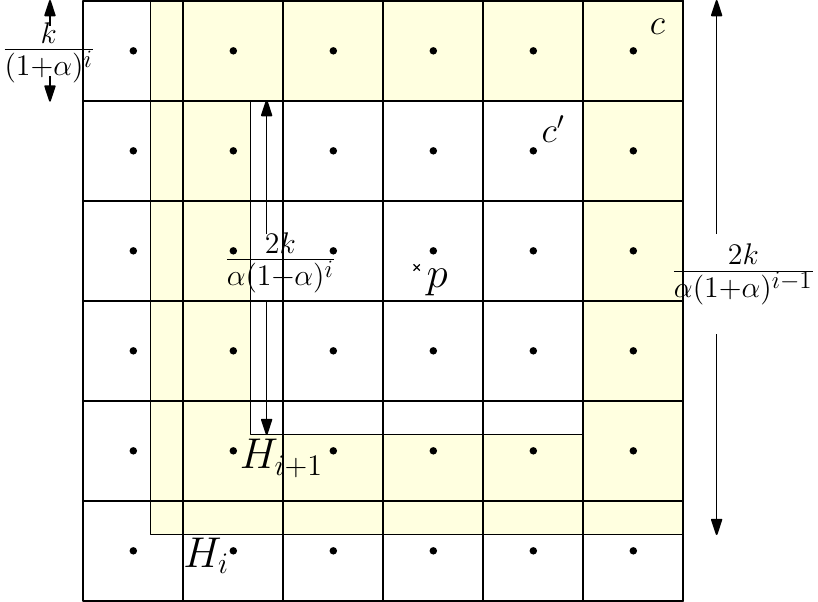}
    \caption{}
    \label{fig:tiling}
\end{subfigure}
    \caption{(a) Illustration of annular region $A_{i}=H_{i}\setminus H_{i+1}$. (b) {The cells of $\Delta'$ are depicted, where one of the corners of a cell $c\in\Delta'$ coincides with one of the corners of $H_i$}.}
    \label{fig:sq1}
\end{figure}
\begin{proof}
 Let  $\I$ be the set of $\alpha$-aspect${_\infty}$ fat objects presented to the algorithm.  Let $\A$ and $\OO$ be the piercing set returned by $\AC$ and an offline optimal for $\I$.   
Let $p \in \OO$ be a piercing point and let $\I_p\subseteq \I$ be the collection of all input $\alpha$-aspect${_\infty}$ fat objects pierced by the point $p$.
 Let $\A_{p}\subseteq \A$ be the set of piercing points placed by \AC\ to pierce all the input  objects in $\I_p$. 
 It is easy to see that $\A=\cup_{p\in \OO}\A_p$.
Therefore, the competitive ratio of our algorithm is  upper bounded by  $\max_{p\in\OO}|\A_p|$. 

Let us consider any point $a\in \A_p$. Since $a$ is the center of an $\alpha$-aspect${_\infty}$ fat object $\sigma\in \I_p$ containing the point $p$ and having width at most $k$, the distance  between $a$ and $p$ is at most $\frac{k}{\alpha}$.
 Therefore, a hypercube $H_1$ of side length $\frac{2k}{\alpha}$, centered at $p$, contains all the points in $\A_p$.  Let $H_i$ be a hypercube centered at $p$ having side length of ${\frac{2k}{\alpha(1+\alpha)^{i-1}}}$, where $i\in [m]$ and  $m$ is the smallest integer such that $\frac{2k}{\alpha(1+\alpha)^{m-1}}\leq 1$. Note that  $H_1, H_2,\ldots, H_m$ are concentric hypercubes. Let us define the annular region $A_{i}=H_{i}\setminus H_{i+1}$, where $i\in [m-1]$ {(see Figure~\ref{fig:sq_poly})}. Let $\A_{p,m}=\A_p\cap H_{m}$, and  $\A_{p,i}=\A_p\cap A_i$ be the subset of $\A_p$ that is contained in the region $A_i$, for $i\in [m-1]$. Since the distance  between any two points in $H_m$ is at most one and the width of any object in $\I_p$ is at least one, any object belonging to $\I_p$ having center in $H_m$ contains the entire hypercube $H_m$. As a result, our online algorithm places at most one piercing point in $H_m$. Thus, $|\A_{p,m}|\leq 1$.
\begin{figure}[htbp]
    \centering
\includegraphics[width=125 mm]{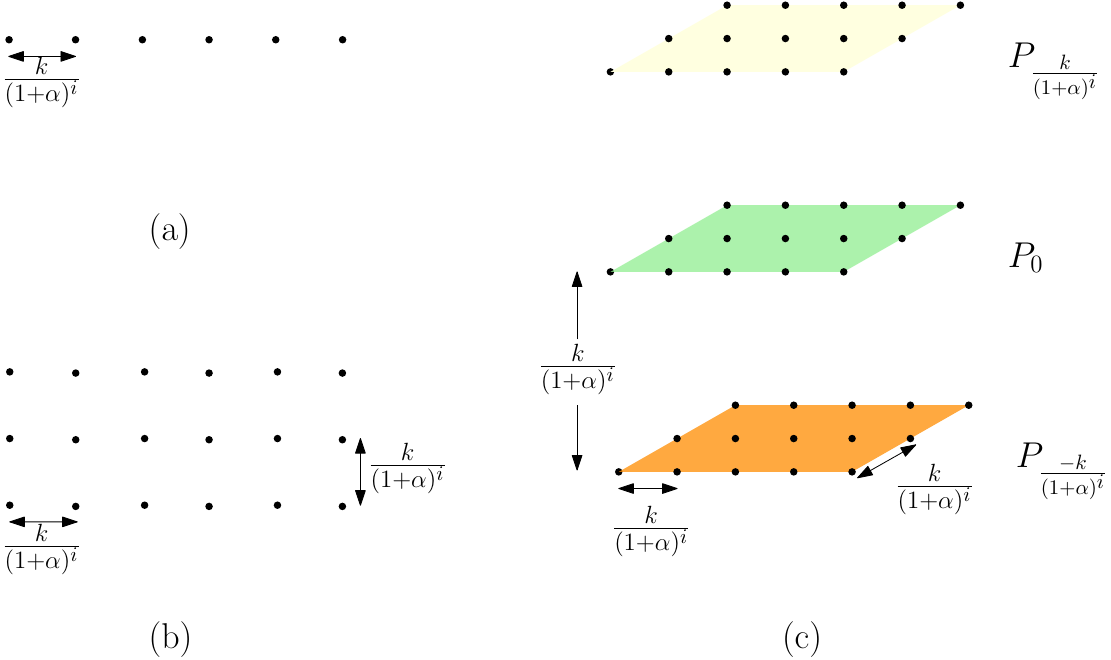}
       \caption{{Points of $\Pi_d$ are drawn  for (a) $d=1$, (b) $d=2$, and (c) $d=3$.  Let $P_q$ be the hyperplane $\{x\in\IR^{3}\ |\ x(x_{3})=q\}$. In Figure (c), the projection of $\Pi_3$ over planes $P_{\frac{k}{(1+\alpha)^i}}$ (colored yellow), $P_{0}$ (colored green) and $P_{\frac{-k}{(1+\alpha)^i}}$ (colored orange) over a rectangular region is depicted.}}
       \label{fig:lattice}
\end{figure}

\begin{lemma}\label{lm:Cardinality_A_i}
$|\A_{p,i}|\leq{\Bigl\lceil2\left(1+\frac{1}{\alpha}\right)\Bigr\rceil^d-\Bigl\lfloor\frac{2}{\alpha}\Bigr\rfloor^d},\text{ where }i\in [m-1]$.
\end{lemma}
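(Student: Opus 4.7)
My plan is to bound $|\A_{p,i}|$ in three steps: derive a uniform lower bound on the widths of the objects giving rise to the points in $\A_{p,i}$, use the placement rule of \AC\ to obtain a pairwise $L_\infty$-separation among those points, and then invoke a packing argument inside the $L_\infty$ annulus $A_i$. Throughout, all distances are under $L_\infty$ and all hypercubes are axis-parallel.

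Set $w_i := k/(1+\alpha)^i$. Then the outer half-side of $H_i$ is $r_2 := w_i(1+\alpha)/\alpha$, the inner half-side of $H_{i+1}$ is $r_1 := w_i/\alpha$, and crucially the annulus thickness equals $r_2 - r_1 = w_i$. Fix $a \in \A_{p,i}$ and let $\sigma_a$ be the input object whose arrival caused \AC\ to place $a$; so $\sigma_a$ is centered at $a$ and contains $p$. Since $a \in A_i$, we have $d_\infty(a, p) > r_1 = w_i/\alpha$. As $p \in \sigma_a$, the $L_\infty$-height of $\sigma_a$ at $a$ is at least $d_\infty(a, p) > w_i/\alpha$; hence by the definition of $\alpha$-aspect$_\infty$ ratio, the width $w_a$ of $\sigma_a$, equal to $\alpha$ times its height, satisfies $w_a > w_i$.

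Next, for the pairwise separation, fix distinct $a, a' \in \A_{p,i}$ and without loss of generality assume $a$ was placed after $a'$. When $\sigma_a$ arrived, $a'$ already belonged to the piercing set but did not pierce $\sigma_a$, so $a' \notin \sigma_a$. By Property~\ref{prop2.2} applied to $\sigma_a$, the hypercube of side $2w_a$ centered at $a$ lies inside $\sigma_a$; hence $a'$ lies outside this hypercube, which gives $d_\infty(a, a') > w_a > w_i$.

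Finally, attach to each $a \in \A_{p,i}$ the open hypercube $B_a$ of side $w_i$ centered at $a$. By the separation bound just derived, $\{B_a : a \in \A_{p,i}\}$ is a pairwise interior-disjoint family. Since each center $a$ lies in $A_i$ and the annulus thickness equals the side $w_i$, an $L_\infty$-packing argument---aligning a grid of cells of side $w_i$ with the boundaries of $H_i$ and $H_{i+1}$ and observing that each annulus cell contains at most one center of $\A_{p,i}$ (any two centers in the same cell would be at distance $<w_i$)---yields
\[
|\A_{p,i}| \;\le\; \frac{(2r_2)^d - (2r_1)^d}{w_i^d} \;=\; 2^d\left(\left(1 + \frac{1}{\alpha}\right)^d - \left(\frac{1}{\alpha}\right)^d\right),
\]
which is the stated bound. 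I expect the last step to be the main obstacle: a naive volume argument, using a $w_i/2$-inflation of the annulus that comfortably contains each $B_a$, overcounts by a constant factor. Getting the tight bound crucially exploits that $w_i = k/(1+\alpha)^i$ was engineered so that the annulus thickness $r_2 - r_1$ equals the grid-cell side, letting the boundary protrusions of the $B_a$'s be absorbed without double counting.
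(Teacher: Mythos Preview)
Your proof is essentially the same as the paper's: the paper also tiles the annulus $A_i$ by (at most) $2^d\bigl((1+\tfrac{1}{\alpha})^d-(\tfrac{1}{\alpha})^d\bigr)$ hypercubes of side $w_i=\frac{k}{(1+\alpha)^i}$ and shows, via the identical width lower bound (any $\sigma$ with center in $A_i$ has height $\ge r_1$, hence width $\ge w_i$), that each such cube can contain at most one point of $\A_{p,i}$. You first derive the pairwise $L_\infty$-separation and then count grid cells, whereas the paper first fixes a cell and then argues one-per-cell, but the content and the final count are identical.
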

\begin{proof}
Let $H$ be any hypercube of side length $\frac{k}{(1+\alpha)^i}$ {such that $H\cap A_i\neq\emptyset$}.
First, we argue that our online algorithm places at most one piercing point in $H\cap A_i$ to pierce the objects in $\I_p$. 
Let $q_1$ be the first  piercing point placed by our online algorithm in {$H\cap A_i$}. For a contradiction, let us assume that our online algorithm places another piercing point $q_2\in {H\cap A_i}$. Let $\sigma \in \I_p$ be the object centering at $q_2$ for which this piercing point was placed.  Since the object $\sigma$ contains $p$ and   the distance  between $p$ and $q_2$ is at least $\frac{k}{\alpha(1+\alpha)^i}$, the height of the object $\sigma$ is at least $\frac{k}{\alpha(1+\alpha)^i}$ and the width is at least $\frac{k}{(1+\alpha)^{i}}$. 
Note that the distance between any two points in $H$ is at most $\frac{k}{(1+\alpha)^i}$. Therefore, the distance between $q_1$ and $q_2$ is at most $\frac{k}{(1+\alpha)^i}$. Since the width of $\sigma$ is at least $\frac{k}{(1+\alpha)^{i}}$, the object $\sigma$ is already pierced by $q_1$. This contradicts our algorithm.
Thus, the region $H\cap A_i$ contains at most one piercing point of $\A_{p,i}$.
To complete the proof, next, we will show that one can cover annular region $A_i$ using at most $\left(\Bigl\lceil2\left(1+\frac{1}{\alpha}\right)\Bigr\rceil^d-\Bigl\lfloor\frac{2}{\alpha}\Bigr\rfloor^d\right)$ interior disjoint hypercubes of side length $\frac{k}{(1+\alpha)^i}$.

{We can partition $\IR^d$ using hypercubes having side length $\frac{k}{(1+\alpha)^i}$. 
Formally, let $\Pi_d=\Big{\{}\alpha_1\frac{k}{(1+\alpha)^{i}} {\bf e}_1+\alpha_2\frac{k}{(1+\alpha)^{i}}{\bf e}_2+\ldots+\alpha_d\frac{k}{(1+\alpha)^{i}}{\bf e}_d\ \Big{|}\ (\alpha_1,\alpha_2,\ldots,\alpha_d)$ $\in \mathbb{Z}^d\Big{\}}$ be a lattice, where ${\bf e}_1,{\bf e}_2,\ldots,{\bf e}_d$ are the standard unit vectors (for an illustration of the lattice in $\IR$, $\IR^2$ and $\IR^3$, see Figure~\ref{fig:lattice}).}

\begin{clm}\label{clm:entire}
    For any point $q$ in $\IR^d$, there exists a point $r$ in $\Pi_d$ such that $d_{\infty}(q,r)\leq\frac{k}{2(1+\alpha)^i}$.
\end{clm}
\begin{proof}
{Notice that for any point $r\in\Pi_d$, each  coordinate of $r$ is an integral multiple of $\frac{k}{(1+\alpha)^i}$.}
For any point $q\in\IR^d$, for each $j\in[d]$, the $j$th coordinate of the point $q$ can  be uniquely written as $q(x_j)=z_j+y_j$, where $z_j\in\left(\frac{k}{(1+\alpha)^{i}}\right)\mathbb{Z}$ and $y_j\in\left[0,\frac{k}{(1+\alpha)^{i}}\right)$. Here,  by  $\beta\mathbb{Z}$ we mean the set $\left\{\beta z \ \Big{|}\ z\in\mathbb{Z}\right\}$.
Now, we define a point $r$ of $\Pi_d$ depending on the coordinates of $q$.
For each $j\in[d]$, we set the $j$th coordinate of $r$ as follows.
\[
r(x_j)=
\begin{cases}
    z_j,& \text{if $y_j\in\big[0,\frac{k}{2(1+\alpha)^{i}}\big)$}\\
   z_j+\frac{k}{(1+\alpha)^i}, & \text{if $y_j\in\big[\frac{k}{2(1+\alpha)^{i}},\frac{k}{(1+\alpha)^{i}} \big)$}.
\end{cases}
\]
As per the construction of {the point $r$, we have $|r(x_j)-q(x_j)|\leq \frac{k}{2(1+\alpha)^i}$ for each $j\in[d]$. As a result, $d_{\infty}(r,q)=\max_{j\in[d]}|r(x_i)-q(x_i)|\leq \frac{k}{2(1+\alpha)^i}$.}
\end{proof}

{Let $\Delta$ be the collection of hypercubes of side length $\frac{k}{(1+\alpha)^{i}}$ centered at the points $\Pi_d$.
Each hypercube of $\Delta$ is known as a \emph{cell}. 
As per the construction of the lattice, the distance between any two distinct points in $\Pi_d$ is at least $\frac{k}{(1+\alpha)^{i}}$. Thus, the cells in $\Delta$ are pairwise interior disjoint. Due to Claim~\ref{clm:entire}, the cells in $\Delta$ covers the entire $\IR^d$. }

{We create a copy $\Delta'$ of $\Delta$ by translating each cell of $\Delta$ by a translation vector such that one of the corners of a cell $c\in\Delta'$ coincides with one of the corners of $H_i$ {(see Figure~\ref{fig:tiling})}. For an instance, if the translation vector is $\Big(\left(p(x_1)+\frac{k}{\alpha (1+\alpha)^{i-1}}-\frac{k}{2(1+\alpha)^{i}}\right),\left(p(x_2)+\frac{k}{\alpha (1+\alpha)^{i-1}}-\frac{k}{2(1+\alpha)^{i}}\right),\ldots,$ $ \left(p(x_d)+\frac{k}{\alpha (1+\alpha)^{i-1}}-\frac{k}{2(1+\alpha)^{i}}\right)\Big)$, then the corner point $\Big(\left(p(x_1)+\frac{k}{\alpha (1+\alpha)^{i-1}}\right),\left(p(x_2)+\frac{k}{\alpha (1+\alpha)^{i-1}}\right),\ldots,$ $\left(p(x_d)+\frac{k}{\alpha (1+\alpha)^{i-1}}\right)\Big)$ of $H_i$ coincides with one of the corners of a cell $c\in\Delta'$.
 Let $\cal C$ be the minimal collection of cells of $\Delta'$ that cover the hypercube $H_i$.
It is easy to observe that the cardinality of $\cal C$ is at most~$\Bigl\lceil2\left(1+\frac{1}{\alpha}\right)\Bigr\rceil^d$.}
Notice that, as per the construction of the lattice, there exists a cell $c'\in\Delta'$, completely contained in $H_{i+1}$, whose one of the corners coincides with one of the corners of $H_{i+1}$. As a result, from the set $\cal C$, at least~$\Bigl\lfloor\frac{2}{\alpha}\Bigr\rfloor^d$ cells are contained in $H_{i+1}$.
Thus,  the union of at most~$\left(\Bigl\lceil2\left(1+\frac{1}{\alpha}\right)\Bigr\rceil^d-\Bigl\lfloor\frac{2}{\alpha}\Bigr\rfloor^d\right)$ hypercubes, each having side length $\frac{k}{(1+\alpha)^i}$, totally cover the annular region $A_i$. 
Hence, the lemma follows.
\end{proof}

Since $\cup \A_{p,i}=\A_p$, we have $|\A_p|\leq{\left(\Bigl\lceil2\left(1+\frac{1}{\alpha}\right)\Bigr\rceil^d-\Bigl\lfloor\frac{2}{\alpha}\Bigr\rfloor^d\right)}(m-1)+1$. As  the value of $m-1 \leq {\Bigl\lceil}\log_{1+\alpha}\left(\frac{2k}{\alpha}\right){\Bigr\rceil}$, the theorem follows.
\end{proof}

\noindent
The value of aspect$_{\infty}$ ratio is $\frac{1}{\sqrt{d}}$ for balls in $\IR^d$. As a result, we have the following.
\begin{corollary}\label{cor:ball}
For piercing  balls in $\IR^d$ with radius in the range $[1,k]$, $\AC$ achieves a competitive ratio of at most ~${\left(\Bigl\lceil2\left(1+\sqrt{d}\right)\Bigr\rceil^d-\Bigl\lfloor2\sqrt{d}\Bigr\rfloor^d\right)}{\lceil}\log_{1+\frac{1}{\sqrt{d}}}(2k\sqrt{d}){\rceil}+1$.
\end{corollary}

\noindent
\textbf{Unit Covering Problem in $\IR^d$}\\
In  Theorem~\ref{thm:alpha_inf}, if we fix the value of $k=1$, then due to Theorem~\ref{equivalence}, we have the following result for the unit covering problem.
\begin{corollary}\label{lem:unit}
For the unit covering problem using translates of a convex object $C$ in $\IR^d$, there exists a deterministic online algorithm whose competitive ratio is at most~${\left(\Bigl\lceil2\left(1+\frac{1}{\alpha}\right)\Bigr\rceil^d-\Bigl\lfloor\frac{2}{\alpha}\Bigr\rfloor^d\right)}{\Bigl\lceil}\log_{1+\alpha}(\frac{2}{\alpha}){\Bigr\rceil} +1$, where $\alpha$ is the aspect$_\infty$ ratio of {$C$}.
\end{corollary}

\subsection{For Similarly Sized Fat Objects in $\IR^3$}\label{ball-3-d}

\begin{theorem}
\label{thm:alpha-ball}
For piercing similarly sized $\alpha$-fat objects in $\IR^3$ having width in the range $[1,k]$, $\AC$ achieves a competitive ratio of at most~${\left(\left(1+\frac{1}{\sin(\theta/2)}\right)^3-1\right)}{\Bigl\lceil}\log_{1+x}(\frac{2k}{\alpha}){\Bigr\rceil}+1$, where $\theta=\frac{1}{2}\cos^{-1}\left(\frac{1}{2} +\frac{1}{1+\sqrt{1+4\alpha^{2}}}\right)$ and  $x=\frac{\sqrt{1+4{\alpha}^2}-1}{2}$.
\end{theorem}
\begin{figure}[htbp]
  \centering
    \begin{subfigure}[b]{0.32\textwidth}
          \centering
        \includegraphics[width=35mm]{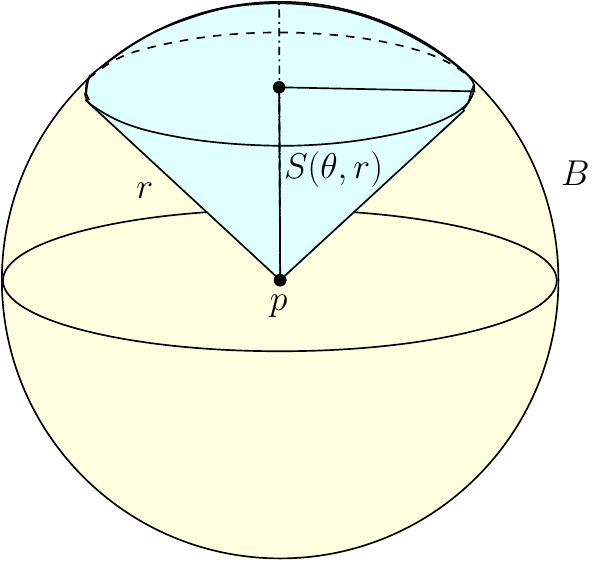}
    \caption{}
    \label{fig:ann_crea}
     \end{subfigure}
     \hfill
      \begin{subfigure}[b]{0.32\textwidth}
          \centering
    \includegraphics[width=35mm]{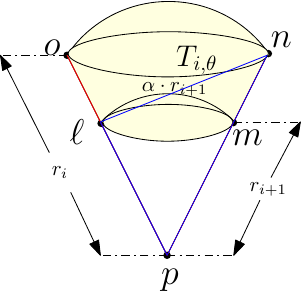}
    \caption{ }
    \label{fig:cone}
     \end{subfigure}
      \hfill
      \begin{subfigure}[b]{0.32\textwidth}
    \centering
    \includegraphics[width= 35mm]{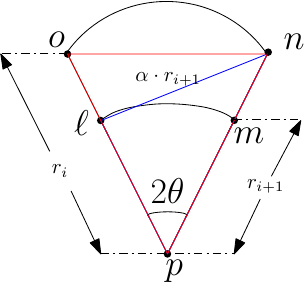}
    \caption{}
    \label{fig:cone_proj}
\end{subfigure}
  
 \caption{(a) Partitioning a ball $B$ of radius $r$ using spherical sector $S(\theta,r)$. (b) Description of a spherical sector $S(\theta,r_{i})$ and a spherical block $T_{i,\theta}$. (c) Projection of a spherical sector $S(\theta,r_{i})$.}
\end{figure}
\begin{proof}
 Let  $\I$ be the set of similarly sized $\alpha$-fat objects in $\IR^3$ presented to the algorithm.  Let $\A$ and $\OO$ be two piercing sets for $\I$ returned by $\AC$ and an offline optimal, respectively.  
Let $p$ be any piercing point of $\OO$. Let $\I_p\subseteq \I$ be the set of input  objects  pierced by the point $p$. Let $\A_p$ be the set of piercing points placed by our algorithm to pierce all the objects in $\I_p$.  To prove the theorem,  we will give an upper bound of $|\A_p|$.

Let us consider any point $a\in\A_p$. Since $a$ is the center of an $\alpha$-fat object $\sigma\in\I_p$ (containing the point $p$) having width at most $k$ (height at most $\frac{k}{\alpha}$), the distance between $a$ and $p$ is at most $\frac{k}{\alpha}$.
Therefore, a ball $B_1$ of radius $\frac{k}{\alpha}$, centered at $p$, contains all the points in $\A_p$.
Let $x=\frac{\sqrt{1+4{\alpha}^2}-1}{2}$ be a positive constant.
Let $B_i$ be a ball centered at $p$ having radius $r_i=\frac{k}{\alpha(1+x)^{i-1}}$, where $i\in[m]$ and $m$ is the smallest integer such that $\frac{k}{\alpha(1+x)^{m-1}}\leq\frac{1}{2}$. Note that $B_1,B_2,\ldots,B_m$ are concentric balls, centered at $p$. 
Let $\theta=\frac{1}{2}\cos^{-1}\left(\frac{1}{2} +\frac{1}{1+\sqrt{1+4\alpha^{2}}}\right)$ be a constant angle in
$(0,\frac{\pi}{10}]$. {Since $\theta=\frac{1}{2}\cos^{-1}\left(\frac{1}{2} +\frac{1}{1+\sqrt{1+4\alpha^{2}}}\right)$, and  $x=\frac{\sqrt{1+4{\alpha}^2}-1}{2}$, we have $\cos(2\theta)=\frac{(x+2)}{2(x+1)}$ and $x^2+x=\alpha^2$.}
Let $S(\theta,r_i)$ be a \emph{spherical sector}  obtained by taking the portion of the ball $B_i$ by  a conical boundary with the apex at the center $p$ of the ball and  $\theta$ as the half of the cone angle (for an illustration, see Figure~\ref{fig:ann_crea}). 
For any $i\in [m-1]$, let us define  the \emph{$i$th spherical block} $T_{i,\theta}=S(\theta,r_{i}) \setminus S(\theta,r_{i+1})$. 

\begin{clm}\label{claim:max_dist}
   The distance between any two points in $T_{i,\theta}$ is at most~{$\alpha r_{i+1}$}.
\end{clm}
\begin{figure}[htbp]
  \centering
   \begin{subfigure}[b]{0.24\textwidth}
          \centering
        \includegraphics[width=45mm]{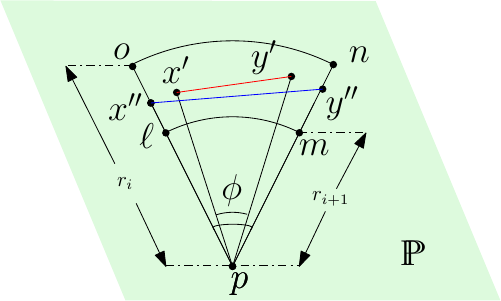}
    \caption{}
    \label{fig:plane}
     \end{subfigure}
      \hfill
      \begin{subfigure}[b]{0.24\textwidth}
    \centering
    \includegraphics[width= 30mm]{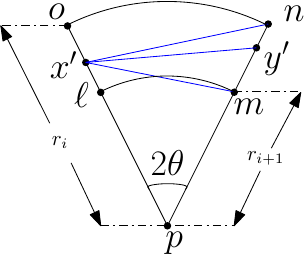}
    \caption{}
    \label{fig:des-1}
\end{subfigure}
     \hfill
    \begin{subfigure}[b]{0.24\textwidth}
          \centering
        \includegraphics[width=30mm]{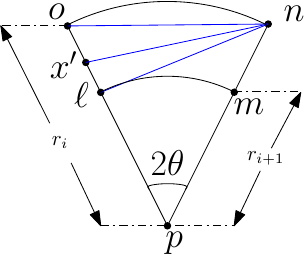}
    \caption{}
    \label{fig:des-2}
     \end{subfigure}
     \hfill
      \begin{subfigure}[b]{0.24\textwidth}
          \centering
    \includegraphics[width=30mm]{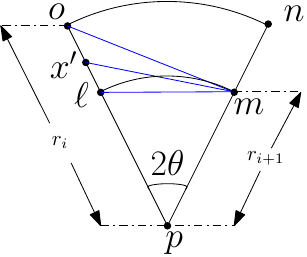}
    \caption{ }
    \label{fig:des-3}
     \end{subfigure}

 \caption{(a) Description of the plane $\mathbb{P}$. (b) Illustration of triangles $\triangle my'x'$ and $\triangle ny'x'$. (c) Illustration of triangles $\triangle oox'n$ and $\triangle \ell x'n$ (d) Illustration of triangles $\triangle ox'm$ and $\triangle x'\ell m$, in $T_{i,\theta}$.}
 \label{fig:projection}
\end{figure}

\begin{proof}
Let $x'$ and $y'$ be a farthest pair of points in $T_{i,\theta}$.
% such that the angle $xpy<2\theta$. 
Let ${\mathbb P}$ be the plane passing through the points $p$, $x'$ and $y'$ (see Figure~\ref{fig:des-2}). Let $\ell,m,n,o$ be the corner points of $T_{i,\theta}\cap {\mathbb P}$ (see Figure~\ref{fig:projection}(b-d)). 
If $\angle x'py'<2\theta$, then we can find two points $x'', y''\in {\mathbb P}\cap T_{i,\theta}$ such that $d(p,x')=d(p,x'')$, $d(p,y')=d(p,y'')$ and $\angle x''py''=2\theta$ (see Figure~\ref{fig:plane}).
Due to the {law of cosines}, it is easy to observe that $d(x'',y'')> d(x',y')$. Therefore, the angle $\angle x'py'=2\theta$. In other words, $x'$ and $y'$ must lie on the line segment $\overline{\ell o}$ and $\overline{m n}$, respectively. In this case, we have $d(x',y')$ is at most $\max\{d(o,n),d(\ell,n)\}$.
To see this,  consider triangles $\triangle my'x'$ and $\triangle ny'x'$ (see Figure~\ref{fig:des-1}). Since the point $y'$ lies on the line segment $\overline{mn}$, either the angle $\angle x'y'n$ or the angle $\angle x'y'm$ is at least $\frac{\pi}{2}$. Now, due to the {law of sines}, we have $d(x',y')\leq \max\{d(x',n),d(x',m)\}$. For  similar reasons, we have $d(x',n)\leq \max\{d(o,n),d(\ell,n)\}$ (see Figure~\ref{fig:des-2}), and
$d(x',m)\leq \max\{d(o,m),d(\ell,m)\}=d(o,m)$ (see Figure~\ref{fig:des-3}). Due to the law of cosine, we have $d(\ell,n)=d(o,m)$. Therefore, $d(x',y')$ is at most $\max\{d(o,n),d(\ell,n),d(o,m)\}={\max}\{d(o,n),d(\ell,n)\}$.
 
{Now, we will prove that $d(o,n)=d(\ell,n)=\alpha r_{i+1}$.
  First, consider  the triangle $\triangle \ell p n$ (see Figure~\ref{fig:cone} and \ref{fig:cone_proj}). Here,  we have 
\begin{align*}
d^2(\ell,n)=&d^2(p,\ell)+d^2(p,n)-2d(p,\ell)d(p,n)\cos{(2\theta)} \tag{Due to the {law of cosines}}\\
 =&{r_{i+1}}^2+{r_i}^2-2{r_{i+1}}{r_i}\cos{(2\theta)} \tag{Since $d(p,\ell)=r_{i+1}$ and $d(p,n)=r_i$}\\
=&\left(\frac{k}{\alpha(1+x)^{i}}\right)^2+\left(\frac{k}{\alpha(1+x)^{i-1}}\right)^2-2\left(\frac{k}{\alpha(1+x))^{i}}\right)\left(\frac{k}{\alpha(1+x))^{i-1}}\right)\cos(2\theta) \tag{Since $r_{i+1}=\frac{k}{\alpha(1+x)^{i}}$ and $r_{i}=\frac{k}{\alpha(1+x)^{i-1}}$}\\
=&\left(\frac{k}{\alpha(1+x)^{i}}\right)^2 \left(1+(1+x)^2-2(1+x)\cos{(2\theta)}\right)\\
=&r_{i+1}^2\left(1+(1+x)^2-2(1+x)\cos{(2\theta)}\right) \tag{Since $r_{i+1}=\frac{k}{\alpha(1+x)^{i}}$}\\
=&r_{i+1}^2\left(1+(1+x)^2-(x+2)\right) \tag{Since $\cos(2\theta)=\frac{x+2}{2(x+1)}$}\\
=&r_{i+1}^2\left(1+1+x^2+2x-x-2\right)=r_{i+1}^2 \left(x^2+x\right)\\
=&\left(\alpha r_{i+1}\right)^2 \tag{Since $x^2+x=\alpha^2$}.
 \end{align*}}
 Now, consider the triangle $\triangle opn$ (see Figure~\ref{fig:cone} and \ref{fig:cone_proj}). Here, we have 
\begin{align*}
d^2(o,n)=&d^2(p,o)+d^2(p,n)-2d(p,o)d(p,n)\cos{(2\theta)} \tag{Due to the {law of cosines}}\\
=&{r_{i}}^2+{r_i}^2-2{r_{i}}{r_i}\cos{(2\theta)} \tag{Since $d(p,o)=d(p,n)=r_{i}$}\\
=&2{r_{i}}^2(1-\cos{(2\theta)})\\
=&2r_{i+1}^2(1+x)^2\left(1-\cos{(2\theta)}\right)\tag{Since  $r_i=r_{i+1}(1+x)$}\\
=&2r_{i+1}^2(1+x)^2\left(1-\frac{(x+2)}{2(x+1)}\right) \tag{Since  $\cos(2\theta)=\frac{x+2}{2(x+1)}$}\\
=&2r_{i+1}^2(1+x)^2\left(\frac{2(x+1)-(x+2)}{2(x+1)}\right)=r_{i+1}^2\left(x^2+x\right) \\
=&\left(\alpha r_{i+1}\right)^2 \tag{Since  $x^2+x=\alpha^2$}.
\end{align*}
Thus, we have $d(o,n)=d(\ell,n)=\alpha r_{i+1}$. Therefore {$\alpha r_{i+1}$ is the maximum distance between any two points in the region $T_{i,\theta}$.}
\end{proof}

\begin{clm}\label{claim:7}
For each $i\in [m]$, our algorithm places at most one piercing point in the spherical block $T_{i,\theta}$ to pierce any object in~$\I_p$.
 \end{clm}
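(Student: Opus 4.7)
The plan is a contradiction argument of the same shape as in Lemma~\ref{lm:Cardinality_A_i}. I would suppose $\AC$ inserts two distinct piercing points $q_1, q_2 \in T_{i,\theta}$ while processing objects of $\I_p$, with $q_1$ placed strictly before $q_2$. Then $q_2$ was added as the center of some input object $\sigma \in \I_p$ for which the already-placed piercing points failed to pierce $\sigma$; in particular $q_1 \notin \sigma$. The goal is to reach a contradiction by exhibiting $q_1 \in \sigma$.

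First I would produce a lower bound on the width of $\sigma$. Since $\sigma \in \I_p$ we have $p \in \sigma$, and the position of $q_2 \in T_{i,\theta}$ forces $d(p,q_2)$ to be at least the inner radius of the block (the sphere bounding the smaller spherical sector that $q_2$ escapes). Because $q_2$ is the center of $\sigma$ and $p \in \sigma$, the height of $\sigma$ is at least $d(p,q_2)$, so $\alpha$-fatness yields a matching lower bound on the width $w$. By Property~\ref{prop1.2}, the closed Euclidean ball of radius $w$ centered at $q_2$ is then contained in $\sigma$. Invoking Claim~\ref{claim:max_dist} to bound $d(q_1,q_2)$ from above by the diameter of $T_{i,\theta}$, and combining with the lower bound on $w$, should yield $d(q_1,q_2) \le w$, placing $q_1$ inside the inscribed ball and hence inside $\sigma$, the desired contradiction.

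The main obstacle I anticipate is making this final inequality tight, and this is precisely why the choices $x = \frac{\sqrt{1+4\alpha^{2}}-1}{2}$ and $\theta = \tfrac{1}{2}\cos^{-1}\!\bigl(\tfrac{1}{2} + \tfrac{1}{1+\sqrt{1+4\alpha^{2}}}\bigr)$ look so ad hoc. They enforce the identity $x(1+x) = \alpha^{2}$, which in turn calibrates the diameter of the spherical block delivered by Claim~\ref{claim:max_dist} to exactly match the width lower bound coming from fatness. Without this algebraic calibration, the two extremal pairs of points in $T_{i,\theta}$, namely antipodal points on the outer spherical cap and a radial pair crossing the block, would give different diameter values and neither would line up with the $w$ we need, leaving a gap. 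Once the calibration is observed, however, the combinatorial half of Claim~\ref{claim:7} collapses to a one-line containment argument and the proof is complete.
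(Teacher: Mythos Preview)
Your proposal is correct and follows essentially the same argument as the paper: assume two centers $q_1,q_2\in T_{i,\theta}$ are placed, use $p\in\sigma$ together with the inner radius of $T_{i,\theta}$ to lower bound the width of the triggering object $\sigma$, and then invoke Claim~\ref{claim:max_dist} (whose constants are calibrated via $x(1+x)=\alpha^2$) to force $q_1\in\sigma$, a contradiction. Your remark that the peculiar choices of $x$ and $\theta$ are precisely what makes the diameter of $T_{i,\theta}$ coincide with the width lower bound is exactly the point.
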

\begin{proof}
Let $q_1$ be the first piercing point placed by $\AC$ in $T_{i,\theta}$. For a contradiction, let us assume that $\AC$ places another piercing point $q_2\in T_{i,\theta}$, where $q_2$ is the center of some object $\sigma\in \I_p$. Since $\sigma$ contains the point $p$, and the distance between $p$ and $q_2$ is at least {$r_{i+1}$}, the height and width of $\sigma$ are at least {$r_{i+1}$ and $\alpha  r_{i+1}$}, respectively.
Due to Claim~\ref{claim:max_dist}, the distance between any two points in the spherical block $T_{i,\theta}$ is at most ~$\alpha  r_{i+1}$. Therefore, the distance between $q_1$ and $q_2$ is at most $\alpha  r_{i+1}$. Since the width of $\sigma$ is at least $\alpha  r_{i+1}$, the object $\sigma$ is already pierced by $q_1$. This contradicts our algorithm.
Hence, $\AC$ places at most one piercing point in a spherical block $T_{i,\theta}$ to pierce objects in $\I_p$.
\end{proof}

Due to~\cite[Lemma~5.3]{DevroyeGL1997},
for any fixed $\theta\in(0,\pi/2)$, we need at most $\left(\left(1+\frac{1}{\sin(\theta/2)}\right)^3-1\right)$ spherical sectors $S{(1,\theta)}$ to completely cover the ball $B_1$. 
Combining this with Claim~\ref{claim:7},  we have $|\A_p|\leq{\left(\left(1+\frac{1}{\sin(\theta/2)}\right)^3-1\right)} m$. We can give a slightly better estimation {as follows}. Since the distance between any two points in the innermost ball $B_m$ is at most one, any object in $\I_p$  having center $q\in B_m$  contains the entire ball $B_m$. As a result, our online algorithm  places at most one piercing point in  $B_m$. Thus, $|\A_p|\leq{  \left(\left(1+\frac{1}{\sin(\theta/2)}\right)^3-1\right)}(m-1)+1\leq {\left(\left(1+\frac{1}{\sin(\theta/2)}\right)^3-1\right)}{\Bigl\lceil}\log_{1+x}(\frac{2k}{\alpha}){\Bigr\rceil}+1$. This completes the proof of the theorem.
 \end{proof}

Analogous to Theorem~\ref{thm:alpha-ball}, we have a similar result for $\alpha$-fat objects in $\IR^2$. Here, we consider circular sector ${C}(\theta,r)$ with central angle $2\theta$ instead of spherical sector $S(\theta,r)$ with central angle $2\theta$.

  \begin{theorem} \label{thm10}
  For piercing similarly sized $\alpha$-fat objects in $\IR^2$ having width  in the range $[1,k]$, $\AC$ achieves a competitive ratio of at most~{$\Bigl\lceil\frac{\pi }{\theta }\Bigr\rceil \Bigl\lceil\log_{1+x} (2k/\alpha)\Bigr\rceil  +1$},  where $\theta=\frac{1}{2}\cos^{-1}\left(\frac{1}{2} +\frac{1}{1+\sqrt{1+4\alpha^{2}}}\right)$ and  $x=\frac{\sqrt{1+4{\alpha}^2}-1}{2}$.
\end{theorem}

\begin{figure}[htbp]
          \centering
        \includegraphics[width=35mm]{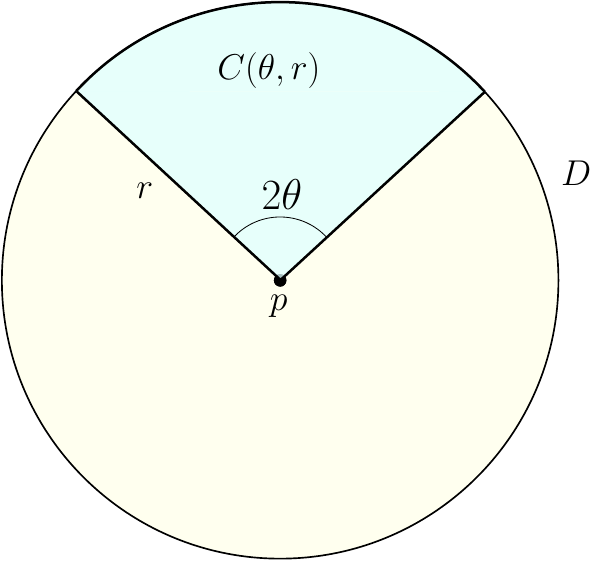}
    \caption{Partitioning the disk $D$ of radius $r$ using circular sector {$C(\theta,r)$}.}
    \label{fig:2d-ann_crea}
\end{figure}

\begin{proof}
 Let  $\I$ be the set of input $\alpha$-fat objects in $\IR^2$ presented to the algorithm.  Let $\A$ and $\OO$ be two piercing sets for $\I$ returned by $\AC$ and an offline optimal, respectively.  
Let $p$ be any piercing point of an offline optimal $\OO$ (see Figure~\ref{fig:2d-ann_crea}). Let $\I_p\subseteq \I$ be the set of input  objects  pierced by the point $p$. Let $\A_p$ be the set of piercing points placed by our algorithm to pierce all the objects in $\I_p$.  We will give an upper bound of $|\A_p|$ to prove the theorem.

Let us consider any point $a\in\A_p$. Since $a$ is the center of an $\alpha$-fat object $\sigma\in\I_p$ having width at most $k$ (height at most $\frac{k}{\alpha}$), the distance between $a$ and $p$ is at most $\frac{k}{\alpha}$.
Therefore, a disk $D_1$ of radius $\frac{k}{\alpha}$, centered at $p$, contains all the points in $\A_p$. Let $x=\frac{\sqrt{1+4{\alpha}^2}-1}{2}$ be a positive constant.
Let $D_i$ be a disk centered at $p$ having radius $r_i=\frac{k}{\alpha(1+x)^{i-1}}$, where $i\in[m]$ and $m$ is the smallest integer such that $\frac{k}{\alpha(1+x)^{m-1}}\leq\frac{1}{2}$. Note that $D_1,D_2,\ldots,D_m$ are concentric disks, centered at $p$. 
Let $2\theta=\cos^{-1}\left(\frac{1}{2} +\frac{1}{1+\sqrt{1+4\alpha^{2}}}\right)$ be a constant angle in
$(0,\frac{\pi}{5}]$.
Let ${C}(\theta,r_i)$ be a \emph{circular sector}  obtained by taking the portion of the disk $D_i$ by  a conical boundary with the apex at the center $p$ of the disk and $\theta$ as the half of the cone angle. 
For any $i\in [m-1]$, let us define  a \emph{$i$th circular block} {$T_{i,\theta}={C}(\theta,r_{i}) \setminus {C}(\theta,r_{i+1})$}.

Similar to the proof of {Claims~\ref{claim:max_dist} and~\ref{claim:7}}, one can prove the following.
\begin{clm}\label{2d-claim:max_dist}
   The distance between any two points in $T_{i,\theta}$ is at most~{$\alpha r_{i+1}$}.
\end{clm}

\begin{clm}\label{2d-claim:7}
For each $i\in [m]$, our algorithm places at most one piercing point in a circular block $T_{i,\theta}$ to pierce any object in~$\I_p$.
 \end{clm}

Since $\Bigl\lceil\frac{2\pi }{2\theta }\Bigr\rceil$  circular sectors will entirely cover $D_1$, the total number of piercing points $\A_p$ placed by our online algorithm to pierce objects in $\I_p$ is at most~$\Bigl\lceil{\frac{\pi}{\theta }}\Bigr\rceil\Bigl\lceil\log_{1+x}(2k/\alpha)\Bigr\rceil+1$.
Note that the disk of radius $\leq 1/2$ is common to all circular sectors. Only one piercing point is sufficient for that disk. Hence, one is not multiplied with {$\Bigl\lceil\frac{\pi }{\theta }\Bigr\rceil$}. Thus, the theorem follows.
 \end{proof}

Observe that for hypercubes in $\IR^d$, the value of $\alpha$ is $\frac{1}{\sqrt{d}}$. As a corollary to Theorem~\ref{thm:alpha-ball} (respectively, Theorem~\ref{thm10}), we have the following.
\begin{corollary}
For piercing similarly sized hypercubes in $\IR^3$ $($respectively, $\IR^2)$ with side length in the range $[1,k]$, $\AC$ achieves a
% there exists a deterministic online algorithm whose 
competitive ratio of at most~${909}{\Bigl\lceil}\log_{\frac{\sqrt{7}-\sqrt{3}}{2\sqrt{3}}}(2\sqrt{3}k){\Bigr\rceil}+1$ $\Big($respectively, ${12\Bigl\lceil}\log_{\frac{\sqrt{7}-\sqrt{3}}{2\sqrt{3}}}(2\sqrt{2}k){\Bigr\rceil}+1\Big)$.
\end{corollary}

Observe that the value of $\alpha$ is $1$ for balls. As a corollary to Theorem~\ref{thm10}, we have the following upper bound that gives a better bound than Corollary~\ref{cor:ball} for balls in $\IR^2$. 
\begin{corollary}
For piercing  balls in $\IR^2$  with radius in the range $[1,k]$, $\AC$ achieves a 
% there exists a deterministic online algorithm whose 
competitive ratio of at most~${10\Bigl\lceil}\log_{\frac{\sqrt{5}-1}{2}}(2k){\Bigr\rceil}+1$. 
\end{corollary}

\subsection{For Homothetic Hypercubes in $\IR^d$}
Since the value of aspect$_{\infty}$ ratio is $1$ for (axis-aligned) hypercube and the fact that a hypercube with a side length in the range $[1,k]$ is equivalent to a hypercube with a width in the range $[\frac{1}{2},\frac{k}{2}]$, as a  corollary to Theorem~\ref{thm:alpha_inf}, $\AC$ has
a competitive ratio of at most~$2^d\left(2^{d} -1\right){\lceil}\log_{2} k{\rceil}+1$  for piercing homothetic hypercubes having a side length in the range $[1,k]$. Here, we propose an algorithm that has a better competitive ratio.

Let $\sigma\subset \IR^d$ be a hypercube having side length $\ell$. We partition the hypercube $\sigma$ into $2^d$ smaller sub-hypercubes, each having side length $\ell/2$. Let $P_\sigma^j$ be a $j$th sub-hypercube of $\sigma$, where $j\in[2^d]$.
\begin{observation}\label{obs:3}
For each $j\in[2^d]$, if we place points at the vertices of $P_\sigma^j$, then $\sigma$ contains exactly $3^d$ distinct points.
\end{observation}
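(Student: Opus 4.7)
The plan is to introduce coordinates on $\sigma$ so that the claim reduces to counting points of a regular grid. Without loss of generality I would translate and rotate (using that all hypercubes are fixed-oriented, so the sides are axis-parallel) so that $\sigma = [0,\ell]^d$. The partition of $\sigma$ into $2^d$ sub-hypercubes of side length $\ell/2$ is obtained by bisecting each coordinate axis at $\ell/2$, so the sub-hypercubes are precisely the sets
\[
P^j_\sigma \;=\; \prod_{i=1}^{d}\bigl[\,a_i^j,\, a_i^j + \tfrac{\ell}{2}\bigr],
\]
where $j\in[2^d]$ indexes the $2^d$ choices of $(a_1^j,\ldots,a_d^j)\in\{0,\ell/2\}^d$.

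Next I would identify the vertex set. For a fixed $j$, the $2^d$ vertices of $P^j_\sigma$ are exactly the points whose $i$th coordinate is either $a_i^j$ or $a_i^j+\ell/2$, and hence always lies in $\{0,\ell/2,\ell\}$. Conversely, any point $v=(v_1,\ldots,v_d)\in\{0,\ell/2,\ell\}^d$ is realized as a vertex of at least one $P^j_\sigma$: for each coordinate $i$, set $a_i^j=0$ if $v_i\in\{0,\ell/2\}$ and $a_i^j=\ell/2$ if $v_i=\ell$; then $v_i\in\{a_i^j,a_i^j+\ell/2\}$ by construction. Thus the union over $j\in[2^d]$ of the vertex sets of $P^j_\sigma$ equals exactly the grid $\{0,\ell/2,\ell\}^d$.

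Finally, since $|\{0,\ell/2,\ell\}|=3$ and coordinates are independent, this grid has cardinality $3^d$, giving the desired count. I do not anticipate any real obstacle here: once the coordinate description is fixed, the two inclusions are immediate, and the count is just $|\{0,\ell/2,\ell\}|^d$. The only thing worth being slightly careful about is double-counting, which is ruled out by the fact that I am explicitly computing the cardinality of the \emph{union} of vertex sets, namely the product set $\{0,\ell/2,\ell\}^d$.
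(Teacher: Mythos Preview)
Your proof is correct and follows essentially the same approach as the paper: normalize coordinates so that the union of the vertex sets of the $P_\sigma^j$ is the grid $\{0,\ell/2,\ell\}^d$ (the paper centers at the origin with side length $2$, obtaining $\{-1,0,1\}^d$ instead), and then count. If anything, your version is slightly more careful, since you explicitly verify both inclusions, whereas the paper only notes that each vertex has integer coordinates and then counts the integer points in $\sigma$.
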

\begin{proof}
   Without loss of generality, let the center of $\sigma$ coincide with the origin, and the side length of $\sigma$ is two. Note that the sub-hypercube $P_\sigma^j$ is a hypercube of side length one. Thus, the corner points of $P_\sigma^j$ are integer points. To prove the observation, it suffices to calculate the number of distinct integer points contained in the hypercube $\sigma$. Since $\sigma$ is centered at the origin and has side length two, each coordinate of any integer point in $\sigma$ has three possible values from $\{-1,0,1\}$. As a result, $\sigma$ contains exactly $3^d$ integer points. 
\end{proof}

\noindent
\textbf{\AV:} 
Let $\A$ be the piercing set maintained by our algorithm to pierce the incoming hypercubes. Initially, $\A=\emptyset$. Our algorithm does the following on receiving a new input hypercube $\sigma$.
\begin{itemize}
  \item If the existing piercing set pierces $\sigma$, do nothing.
  \item Otherwise,
  \begin{itemize}
      \item if $\sigma$ is a unit hypercube, our online algorithm adds all the vertices of $\sigma$ to $\A$.
      \item else, our online algorithm adds all vertices of $P_\sigma^j$ to $\A$, for all $j\in[2^d]$.
  \end{itemize}
  \end{itemize}
Due to Observation~\ref{obs:3}, upon the arrival of a hypercube in $\IR^d$, $\AV$ adds at most $3^d$ distinct points.

\begin{theorem}\label{thm:d-hyp}
For piercing homothetic  hypercubes in $\IR^d$ having side length in the range $[1,k]$, \textsc{Algorithm-Vertex} has a competitive ratio of at most~$3^d{\lceil}\log_2 k{\rceil}+2^d$.
\end{theorem}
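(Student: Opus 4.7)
The plan is to fix an arbitrary offline piercing point $p \in \OO$ of the optimum, let $\I_p \subseteq \I$ be the input hypercubes that $p$ pierces, and let $\A_p \subseteq \A$ be the points that $\AV$ places while processing $\I_p$. Since $\A = \bigcup_{p \in \OO} \A_p$, the competitive ratio is bounded by $\max_{p \in \OO} |\A_p|$, so it suffices to show $|\A_p| \leq 3^d \log_2 k + 2^d$.

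The crux of the proof is a containment claim: whenever $\AV$ handles a non-unit hypercube $\sigma \in \I_p$ of side length $\ell$ and places the $3^d$ grid points (the vertices of every $P_\sigma^j$), every subsequent hypercube $\sigma' \in \I_p$ whose side length satisfies $\ell' \geq \ell/2$ already contains one of these $3^d$ points, and therefore cannot trigger a new addition. To prove this, I would translate and scale so that $\sigma = [-\ell/2, \ell/2]^d$; then the added points are exactly the axis-aligned grid $\{-\ell/2,\, 0,\, \ell/2\}^d$. For any $\sigma' = \prod_{i=1}^{d}[a_i, a_i+\ell']$ containing $p \in \sigma$, a one-dimensional case split on whether $a_i \leq -\ell/2$, $-\ell/2 < a_i \leq 0$, or $0 < a_i \leq \ell/2$, combined with $\ell' \geq \ell/2$, shows that $[a_i, a_i+\ell']$ meets $\{-\ell/2,\, 0,\, \ell/2\}$ in every case. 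Because $\sigma'$ is a Cartesian product of these intervals, independently choosing a suitable grid value in each coordinate produces a single point of $\{-\ell/2,\, 0,\, \ell/2\}^d$ that lies in $\sigma'$. A completely analogous coordinate-wise argument with side length $1$ and vertex set $\{0,1\}^d$ yields the unit analogue: once $\AV$ has placed the $2^d$ vertices of one unit hypercube of $\I_p$, every subsequent unit hypercube of $\I_p$ is already pierced.

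Given these two claims, the accounting is short. Listing the hypercubes of $\I_p$ that provoke new additions in arrival order, let $\sigma_{i_1}, \sigma_{i_2}, \ldots, \sigma_{i_t}$ be the non-unit ones with side lengths $\ell_1, \ell_2, \ldots, \ell_t$. The containment claim forces $\ell_{j+1} < \ell_j/2$ for each $j$; combined with $\ell_j > 1$ and $\ell_1 \leq k$, iterating gives $k \geq \ell_1 > 2^{t-1}$, so $t \leq \log_2 k$ (absorbing the usual ceiling in the stated bound). Each such event contributes exactly $3^d$ points by Observation~\ref{obs:3}, and the unit claim ensures that at most one unit-hypercube event occurs, contributing a further $2^d$ points. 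Summing gives $|\A_p| \leq 3^d \log_2 k + 2^d$, yielding the theorem. The main obstacle I expect is the containment claim itself: a naive coordinate-wise argument might pick grid values in different coordinates that do not jointly lie in $\sigma'$, so the proof must carefully exploit that both the grid $\{-\ell/2,\, 0,\, \ell/2\}^d$ and $\sigma'$ are axis-aligned products, so that independent per-coordinate choices assemble into a single point of $\sigma' \cap \A$.
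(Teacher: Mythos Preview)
Your proposal is correct and follows essentially the same approach as the paper: both fix an optimal point $p$, bound $|\A_p|$, and rely on the key geometric fact that once $\AV$ inserts the $3^d$ grid points of a hypercube $\sigma$ of side $\ell$ containing $p$, every hypercube in $\I_p$ of side at least $\ell/2$ is already pierced (and the analogous $2^d$-vertex fact for unit cubes). The only cosmetic difference is the accounting: the paper partitions $\I_p$ into fixed dyadic size classes $H_j$ with side lengths in $(k/2^j,\,k/2^{j-1}]$ and argues that at most one triggering event occurs per class, whereas you track the non-unit triggering events directly and deduce that their side lengths decrease by more than a factor of two, giving at most $\log_2 k$ such events---an equivalent formulation of the same bound.
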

\begin{proof}
{Let  $\I$ be a set of input hypercubes in $\IR^d$ presented to the algorithm. Let $\A$ and $\OO$ be two piercing sets for $\I$ returned by our algorithm and an offline optimal, respectively. Let $p$ be any piercing point of an offline optimal $\OO$. Let $\I_p\subseteq \I$ be the set of input objects pierced by the point $p$. Let $\A_p$ be the set of piercing points placed by our algorithm to pierce all the objects in $\I_p$.  We will give an upper bound of $|\A_p|$ to prove the theorem.}
Let $H_i\subset\I_p$ be the collection of hypercubes in $\I_p$ having side length $\left(\frac{k}{2^i},\frac{k}{2^{i-1}}\right]$, where $i\in [{\lceil}\log_2 k{\rceil}]$, and let $H_{{\lceil}\log_2 k{\rceil}+1}$ be the collection of all hypercubes in $\I_p$ having side length one. Note that $\I_p=\cup_{j=1}^{{\lceil}\log_2 k{\rceil}+1} H_j$.

 For each $j\in[{\lceil}\log_2{k}{\rceil}]$, we claim that our online algorithm places at most $3^d$ piercing points for all hypercubes in $H_j$. 
 Without loss of generality, let us assume that $\sigma\in H_j$ is the first hypercube that is not pierced upon its arrival. To pierce $\sigma$, our online algorithm adds vertices of all $P_\sigma^j$, where $j\in[2^d]$ to $\A_p$. Let $P_\sigma^t$ contains the point $p$, where $t\in[2^d]$. Let $\sigma'(\neq \sigma) \in  H_j$ be any hypercube.  Observe that the side length of $P_\sigma^t$ is at most $\frac{k}{2^j}$ and the side length of  $\sigma'\in H_j$ is at least $\frac{k}{2^j}$, and $\sigma'\cap P_\sigma^t \neq \emptyset$. As a result, $\sigma'$ contains at least one vertex of $P_\sigma^t$. Hence, our algorithm does not add any point to $\A_p$ for $\sigma'$. In other words, due to Observation~\ref{obs:3}, our online algorithm adds $3^d$ distinct points to $\A_p$ for all hypercubes in $H_j$, where $j\in[{\lceil}\log_2{k}{\rceil}]$.

 Now, we will show that our online algorithm places at most $2^d$ piercing points for all hypercubes in $H_{{\lceil}\log_2 k{\rceil}+1}$.
Let us assume that the unit hypercube $\sigma\in H_{{\lceil}\log_2 k{\rceil}+1}$ is the first hypercube that is not pierced upon its arrival. To pierce $\sigma$, our online algorithm adds all the vertices of $\sigma$ to $\A_p$. Note that any hypercube $\sigma'$ in $H_{{\lceil}\log_2 k{\rceil}+1}$ is a translated copy of $\sigma$, and $p$ is the common intersection point between them. Thus, the hypercube $\sigma'$ contains at least one corner point of $\sigma$. Consequently, our algorithm adds at most $2^d$ piercing points to $\A_p$  to pierce all hypercubes in $H_{{\lceil}\log_2 k{\rceil}+1}$.

As a result, $\A_p$ contains at most $3^d{\lceil}\log_2k{\rceil}+2^d$ points. Therefore, the theorem follows.
\end{proof}

\noindent
\textbf{Remark.}  Note that for piercing {translates of a hypercube}, $\AV$ matches the best-known result for the unit covering problem using translates of a hypercube~\cite{ChanZ09}.

\section{Lower Bound}\label{sec:lb}
To obtain a lower bound, we think of a game between two players: Alice and Bob. Here, Alice plays the role of an adversary, and Bob plays the role of an online algorithm. In each round of the game, Alice presents an object such that Bob needs to place a new piercing point, i.e., the object does not contain any of the previously placed piercing points. To obtain a lower bound of the competitive ratio of $\Omega(z)$, it is enough to show that Alice can present a sequence of $z$ nested objects in a sequence of $z$ consecutive rounds of the game such that an offline optimum algorithm uses only one point to pierce all of these objects. Throughout the section, $0<\epsilon<\frac{1}{4}$ is an arbitrary constant close to 0. For example, $\epsilon=10^{-10}$ will also work.
We consider similarly sized $\alpha$-fat objects in $\IR^2$, followed by homothetic hypercubes in $\IR^d$. 
\subsection{Similarly Sized Fat Objects in $\IR^2$}
\begin{theorem}\label{thm:lowerBoundPolygon}
{For $\alpha\in(0,1]$ and $k\geq 1$}, the  competitive ratio of every deterministic online algorithm  for piercing similar sized $\alpha$-fat objects in $\IR^2$ having width in the range $[1,k]$ is at least ${\Bigl\lfloor\log_{\frac{2+\epsilon}{\alpha}}}{k}{\Bigr\rfloor}+1$, {where $0<\epsilon<\frac{1}{4}$ is a sufficiently small constant close to $0$}.
\end{theorem}
\begin{proof}
To prove the lower bound, Alice adaptively construct a sequence of $m={\Bigl\lfloor}\log_{\frac{2{+\epsilon}}{\alpha}}{k}{\Bigr\rfloor}+1$ objects, each with aspect ratio  $\alpha$ and width in the range $[1,k]$ such that Bob needs at least $m$ piercing points to pierce them, while an offline optimum needs just one point. 
Let $w(\sigma)$ and $h(\sigma)$ be the width and height of an object $\sigma$. Let $\sigma_1$, having width $w(\sigma_1)=k$, be the first input object  presented by Alice. For the sake of simplicity, let us assume that the center $c_1$  of $\sigma_1$ coincides with the origin. All remaining objects $\sigma_2,\sigma_3,\ldots,\sigma_{m}$ are presented by Alice adaptively depending on the position of the piercing points $p_1,p_2,\ldots,p_{m-1}$ placed by Bob.  For $i=1,2,\ldots,m={\Bigl\lfloor}\log_{\frac{2{+\epsilon}}{\alpha}}{k}{\Bigr\rfloor}+1$, we maintain the following two invariants.
\begin{itemize}
    \item [(1)] The object $\sigma_i$ having width  $k\left(\frac{\alpha}{2+\epsilon}\right)^{i-1}$ is not pierced by any of the previously placed piercing point $p_j$, where {$j\in[i-1]$};
    \item [(2)] the object $\sigma_i$ is totally contained in the object $\sigma_{i-1}$.
\end{itemize}

 Invariant (1) ensures that Bob needs $m$ piercing points, while invariant (2) ensures that all the objects $\sigma_1,\sigma_2,\ldots,\sigma_{m}$ can be pierced by a single point.
 For $i=1$, both invariants hold trivially. At the end of round $i=1,2,\ldots,m-1$, assume that both invariants hold. {At the end of round $i=m$}, we will show that both invariants hold. Depending on the position of the previously placed piercing point $p_{i-1}$, in the $i$th round of the game, an object $\sigma_{i}$,   having width $w(\sigma_{i})=k\left(\frac{\alpha}{2+\epsilon}\right)^{i-1}$ is presented by Alice to Bob. 
 The center  $c_{i}$ of $\sigma_{i}$  is defined as the following.
\begin{equation*}\label{eqn}
c_{i}=
\begin{cases}
    c_{i-1}+\frac{w(\sigma_{i-1})}{2},& \text{if } p_{i-1}(x_1)\leq c_{i-1}(x_1),\\
      c_{i-1}-\frac{w(\sigma_{i-1})}{2},& \text{otherwise } (i.e., p_{i-1}(x_1)>c_{i-1}(x_1)),
    \end{cases}      
\end{equation*}
where $p_{i-1}(x_1)$ and $c_{i-1}(x_1)$ denotes the first coordinate of $p_{i-1}$ and $c_{i-1}$, respectively.

First, we show that $\sigma_{i}$ is totally contained in $\sigma_{i-1}$.
 Observe that, depending on the position of $p_{i-1}$, the center of $\sigma_{i}$ is either $c_{i-1}+\frac{w(\sigma_{i-1})}{2}$ or $c_{i-1}-\frac{w(\sigma_{i-1})}{2}$. In both cases, we have
$d(c_{i-1},c_{i})=\frac{w(\sigma_{i-1})}{2}$. On the other hand,  $h(\sigma_{i}) = \frac{k}{2+\epsilon}\left(\frac{\alpha}{2+\epsilon}\right)^{i-2}< \frac{k}{2}\left(\frac{\alpha}{2+\epsilon}\right)^{i-2}=\frac{w(\sigma_{i-1})}{2}$.
Hence, $\sigma_{i}$ is totally contained in $\sigma_{i-1}$. Thus, invariant~(2) is maintained.

Note that  $d(p_{i-1},c_{i})$ is greater than the height of $\sigma_{i}$ since
\begin{align*}
    d(p_{i-1},c_{i})
    >\frac{w(\sigma_{i-1})}{2}=\frac{k}{2}\left(\frac{\alpha}{2+\epsilon}\right)^{i-2}
    >&\frac{k}{2+\epsilon}\left(\frac{\alpha}{2+\epsilon}\right)^{i-2}=h(\sigma_{i}).
\end{align*}
The first inequality follows from the definition of $c_{i}$.
Thus, $\sigma_{i}$ does not contain the point $p_{i-1}$.
  Due to the induction hypothesis, $\sigma_{i-1}$ does not contain any of the previously placed piercing point $p_j$ for $j\in[{i-2}]$, and from invariant~(2), we know that $\sigma_{i}$ is contained in $\sigma_{i-1}$. Hence, $\sigma_{i}$ does not contain any of the previously placed piercing point $p_j$, for $j\in[i-1]$. Thus, invariant~(1) is maintained.
  
 Note that when $m > \log_{\frac{2+\epsilon}{\alpha}}{k} +1$, the  width of the object $\sigma_{m}$, i.e., $k\left(\frac{\alpha}{2+\epsilon}\right)^{m-1}$ is  less  than    one. 
 In other words,  for any $\log_{\frac{2+\epsilon}{\alpha}}{k}<m\leq \log_{\frac{2+\epsilon}{\alpha}}{k} +1$, we can construct the input sequence satisfying both invariants.
Since the value of $m$ needs to be integer, we have $m=\Bigl\lfloor\log_\frac{2+\epsilon}{\alpha}{k}\Bigr\rfloor+1$. Hence, the theorem follows.
\end{proof}

We know that the value of $\alpha$ is $1$ for balls in $\IR^2$. As a result, we have 
\begin{corollary}
For $k\geq 1$, the competitive ratio of every deterministic online algorithm for piercing balls in $\IR^2$ having  radius in the range $[1,k]$ is at least $\lfloor \log_{2+\epsilon} k \rfloor-1$, where $0<\epsilon<\frac{1}{4}$ is a sufficiently small constant close to $0$.
\end{corollary}

 \subsection{Homothetic Hypercubes in $\IR^d$}\label{I(C)}
In this subsection, all the hypercubes are axis-aligned.
First, we establish the following essential ingredients to prove our main result.

\begin{lemma}\label{claim:lb_hyper}{$\GS(r)$ $(GSS(r))$:}

For any $r\geq 1$, in $d$ consecutive rounds of the game, Alice can adaptively present $d$ hypercubes $\sigma_{1},\sigma_{2},\ldots,\sigma_{d}$, each having side length $r$ {contained inside a hypercube $S$ having side length $r(2+\epsilon)$}, such that
\begin{itemize}
% \item[(i)] the center of $\sigma_1$ coincides with the center of $S$; 
\item[(i)]  Bob needs to place $d$   points  to pierce them;
\item[(ii)]  the common intersection region $Q=\cap_{i=1}^{d} \sigma_i$ is nonempty; 
\item[(iii)]  moreover, $Q$  contains an \emph{empty} hypercube $E$, of side length at least $\frac{r}{2+\epsilon}$, not containing any of the $d$  piercing points placed by Bob, {where $0<\epsilon<\frac{1}{4}$ is a sufficiently small constant close to $0$}.
\end{itemize}
  
\end{lemma}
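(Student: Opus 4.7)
The plan is to equip Alice with an adaptive ``rolling shift'' strategy, where each new hypercube is obtained from the previous one by translating along a single new axis. Set $\sigma_1 = [0,r]^d$, centered at $(r/2,\ldots,r/2)$. In round $i \geq 2$, after seeing Bob's piercing point $p_{i-1}$, Alice commits to a shift $\Delta_{i-1} \in \{+(r/2+\epsilon),\,-(r/2+\epsilon)\}$ along axis $i-1$, choosing the sign so that the axis-$(i-1)$ projection of the forthcoming cubes is pushed off $p_{i-1}[i-1]$: take $\Delta_{i-1} = r/2+\epsilon$ if $p_{i-1}[i-1] \leq r/2$, and $\Delta_{i-1} = -(r/2+\epsilon)$ otherwise. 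She then presents $\sigma_i = \sigma_1 + \sum_{k=1}^{i-1} \Delta_k e_k$, i.e., $\sigma_1$ cumulatively translated by all shifts committed thus far.

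For (i), I will show $\sigma_i \not\ni p_j$ for every $j < i$. Since $\sigma_j$ was shifted only along axes $1,\ldots,j-1$, its projection onto axis $j$ is the full $[0,r]$, so $p_j[j]\in[0,r]$. But $\sigma_i$ inherits the shift $\Delta_j$, giving $\sigma_i[j]=[\Delta_j, r+\Delta_j]$, which is exactly the interval Alice chose in round $j+1$ to miss $p_j[j]$. Thus Bob is forced to add a fresh point in each round, yielding $d$ distinct piercing points. For (ii), a direct calculation shows $Q=\bigcap_i \sigma_i$ is a box of dimensions $(r/2-\epsilon)^{d-1}\times r$: for $k\leq d-1$, $Q[k]=[0,r]\cap[\Delta_k,r+\Delta_k]$ has length $r/2-\epsilon$, while axis $d$ is never touched by a shift so $Q[d]=[0,r]$. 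In particular $Q$ is a non-degenerate axis-parallel box.

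For (iii), the same argument that proved (i) also yields $p_j[j]\notin Q[j]$ for every $j\leq d-1$, so $p_j\notin Q$ for those $j$. The only subtlety is $p_d$, whose axis-$d$ coordinate is unconstrained by the construction and which may therefore lie in $Q$. I will exploit the unused axis $d$: set $E[k]=Q[k]$ for $k<d$, and let $E[d]$ be whichever of the two length-$(r/2-\epsilon)$ halves $[0,r/2-\epsilon]$ or $[r/2+\epsilon,r]$ does not contain $p_d[d]$; at least one is always safe, since together they cover $[0,r]$ except the gap $(r/2-\epsilon, r/2+\epsilon)$, and either choice works when $p_d[d]$ falls in that gap. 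The resulting $E\subseteq Q$ is an axis-parallel hypercube of side $r/2-\epsilon$ missing every $p_1,\ldots,p_d$. The main conceptual hurdle is ensuring that a single $\sigma_i$ evades \emph{all} previously placed piercing points rather than only the most recent one; the rolling shift mechanism handles this automatically, because once the escape direction in axis $k$ is committed in round $k+1$ it is frozen into every subsequent $\sigma_i$.
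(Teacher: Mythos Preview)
Your proposal is correct and is essentially the same construction as the paper's: you present $\sigma_i = \sigma_1 + \sum_{k<i}\Delta_k e_k$ with $\Delta_k = \pm(r/2+\epsilon)$ chosen from the $k$th coordinate of $p_k$, which is exactly the paper's translation vector ${\bf v}_i=(s(1)(r/2+\epsilon),\ldots,s(i-1)(r/2+\epsilon),0,\ldots,0)$ up to a shift of origin from the cube center to its corner. Your direct coordinate-by-coordinate verification of (i)--(iii) matches the paper's invariant-based induction, including the final step of carving $E$ out of $Q$ along the unused $d$th axis to dodge $p_d$.
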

We refer to $d$ consecutive rounds of the game satisfying the above lemma as a $GSS(r)$. 
% The first hypercube $\sigma_1$ is denoted as the \emph{starter} and t
{The hypercube $S$ is denoted as the \emph{starter}}, and the hypercube $E$ is the \emph{empty hypercube} since it does not contain any of the piercing points placed by Bob in $GSS(r)$.
\begin{proof} 
{Throughout the proof of this lemma, for the sake of simplicity, we assume that the center of $S$ is the origin.}
Let $\sigma_{1}$ be a hypercube of side length $r$ presented by Alice in the first round of the game. {Let  the center of $\sigma_1$ be the same as  the center of $S$.}
 Alice presents the remaining  hypercubes $\sigma_2,\ldots, \sigma_d$, adaptively 
 depending on  Bob's moves. 
  We maintain the following two invariants:  For $i=1,\ldots, d$, when Alice presents  hypercubes  $\sigma_1,\ldots, \sigma_i$ each of side length $r$, and Bob presents piercing points $p_1, \ldots, p_i$, 

\begin{itemize}
    \item[(I)] the hypercube $\sigma_{i}$ is contained inside $S$ and $\sigma_i$ is not pierced by any of the previously placed piercing point $p_{j}$, where {$j\in[i-1]$};
    \item[(II)]  the common intersection region $Q_i=\cap_{j=1}^{i} \sigma_j$ is a  hyperrectangle whose first $(i-1)$ sides are of length $\frac{r}{2+\epsilon}$ each, and  each of the remaining sides are of length  $r$. Moreover, $Q_i$ does not contain any points $p_j$, where {$j\in[i-1]$}.
    \end{itemize}

 An illustration of the planar version of the
game appears in Figure~\ref{fig:sq_lb}.
For $i=1$, both the invariants trivially hold.
For $i=1,2,\ldots,d-1$, assume that both invariants hold. Now, for $i=d$, we will show that both invariants hold. Let us define a translation vector ${\bf v}_i\in \IR^d$ as
${\bf v}_i=\Big(s(1)\left(\frac{r}{2-\epsilon}\right),s(2)\left(\frac{r}{2-\epsilon}\right),\ldots,s(i-1)\left(\frac{r}{2-\epsilon}\right)$ $,0,\ldots,0\Big)$, where $s(j)$, for $j\in[i-1]$, is defined as follows

  \[
  s(j)= 
\begin{cases}
    +1,& \text{if}\ p_{j}(x_j)<0,\text{ where}\ p_{j}(x_j)\text{ is $jth$ coordinate of $p_{j}$}, \\
    -1,              & \text{otherwise.}
\end{cases}
\]

We define $\sigma_i=\sigma_1 +{\bf v}_i$. First, we will prove that $\sigma_i$ is contained inside the hypercube $S$. Since $\sigma_i=\sigma_1 +{\bf v}_i$, the $L_{\infty}$ distance between the center of $\sigma_1$ and any point in $\sigma_i$ is at most $\frac{r+\frac{r}{2-\epsilon}}{2}$, which is strictly less than $\frac{r(2+\epsilon)}{2}$. As a result, $\sigma_i$ is contained inside $S$. Now, we will show that $\sigma_i$ does not contain the point $p_j$.  For any  $j<i$, due to the definition of the $j$th component of the translation vector ${\bf v}_i$, the hypercube $\sigma_i$ does not contain the point $p_j$. Hence, invariant (I) is maintained.

Note that $\sigma_1\cap \sigma_i$ is a  hyperrectangle whose first $(i-1)$ sides are  of length $\frac{r}{2+\epsilon}$ each, and each of the remaining sides is of length  $r$. On the other hand, from the assumption, we know that $Q_{i-1}$ is a  hyperrectangle whose  first $(i-2)$ sides are of length $\frac{r}{2+\epsilon}$ each, and  each of the remaining sides are of length  $r$. Therefore, $Q_i=Q_{i-1}\cap \sigma_i$ is a hyperrectangle whose  first $(i-1)$ sides are of length $\frac{r}{2+\epsilon}$ each, and  each of the remaining sides are of length  $r$. Since $\sigma_i$ does not contain any of the point $p_j$ where $j<i$, the  hyperrectangle $Q_i\subseteq \sigma_i$ also does not contain any of them.  Therefore, invariant (II) is maintained.

At the end of the $d$th round of the game, we have the  hyperrectangle $Q=Q_d$ (whose $d$th side is only of length $r$, other sides are of length $\frac{r}{2+\epsilon}$ each) that does not contain any of the point $p_j$ where $j<d$, but it may contain the point $p_d$. Depending on the $d$th coordinate of the point $p_d$, we can construct a  hypercube $E \subset Q$ of side length $\frac{r}{2+\epsilon}$ that does not contain the point $p_d$. Hence, the lemma follows.
\end{proof}
\begin{figure}
    \centering
    \includegraphics[width=60 mm]{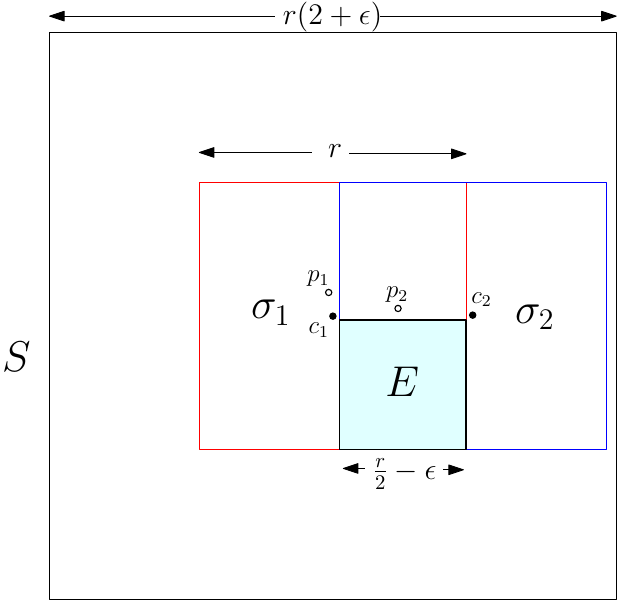}
    \caption{Illustration of  the $GSS(r)$ for $d=2$. 
Initially, Alice presented hypercube $\sigma_{1}$ of side length $r$ in the first round of the game.  Then, Alice presents the second hypercube $\sigma_2$ adaptively,
 depending on  Bob's moves, such that $p_1 \notin \sigma_2$. Also, the common intersection region $\cap_{j=1}^{2} \sigma_j$ is a rectangle whose sides are  of length $\frac{r}{2+\epsilon}$ and $r$.  Moreover, the intersection region contains an \emph{empty} hypercube $E$, of side length at least $\frac{r}{2+\epsilon}$, not containing $p_1$ and $p_2$.}
  \label{fig:sq_lb}
\end{figure}
Now, we prove the following main theorem.

\begin{theorem}\label{thm_sq_d}
For $k\geq 1$, the competitive ratio of every deterministic online algorithm for  piercing homothetic hypercubes in $\IR^d$ with side length in the range $[1,k]$  is at least~$d{\Bigl\lfloor}\log_{(2+\epsilon){^2}}k{\Bigr\rfloor}+2^d$, where $0<\epsilon<\frac{1}{4}$ is a sufficiently small constant close to $0$.
\end{theorem}
\begin{proof}
We maintain the following two invariants:  For $i=1,\ldots, z={d{\Bigl\lfloor}\log_{(2+\epsilon){^2}}k{\Bigr\rfloor}+2^d}$, when Alice presents  hypercubes  $\sigma_1,\ldots, \sigma_i$ and Bob presents piercing points 
 $p_1, \ldots, p_i$, 
 \begin{itemize}
    \item[(III)] the  hypercube $\sigma_i$ is not pierced by any of the previously placed piercing point $p_j$, where {$j\in[i-1]$};
    \item[(IV)] the intersection region $\cap_{j=1}^{i} \sigma_j$ is nonempty.
    \end{itemize}
 Invariant~(III) implies that Bob is forced to place $z$ piercing points, while invariant~(IV) ensures that all the hypercubes $\sigma_1,\sigma_2,\ldots,\sigma_{z}$ can be pierced by a single piercing point.

We evoke the $GSS(r)$ for $t={\Bigl\lfloor\log_{(2+\epsilon)^2} k\Bigr\rfloor}$ time with different values of {$r=k, \frac{k}{\left(2+\epsilon\right)^2}, \ldots,\frac{k}{\left(\left(2+\epsilon\right)^2\right)^{t-1}}$}.
More specifically, for any $t'\in[t-1]$,
the empty hypercube $E_{t'}$ having side length $\frac{k}{(2+\epsilon)^{2t'-1}}$ generated by the $t'$th GSS game, $GSS\left({\frac{k}{(2+\epsilon)^{2(t'-1)}}}\right)$, plays the role of the starter for the $(t'+1)$th evocation of the GSS game, $GSS\left({\frac{k}{(2+\epsilon)^{2t'}}}\right)$.
Since all the $d$ hypercubes of $(t'+1)$th game lie in the interior of the empty hypercube $E_{t'}$, it is straightforward to see that the set of $td$  hypercubes $\sigma_1,\sigma_2,\ldots,\sigma_{td}$ and set of $td$ points $p_1,p_2,\ldots, p_{td}$ satisfy both invariants (III) and (IV).
At the end of the $t$th $GSS$, the empty  hypercube $E_{t}$ has a side length of at least {$(2+\epsilon)$}.
Observe that, Due to Corollary~\ref{corr_illum}, Alice can adaptively present $2^d$  hypercubes, each of side length one, such that  Bob needs to place $2^d$ different piercing points, whereas an offline optimum needs one point to pierce all these $2^d$ hypercubes. Since all these $2^d$ hypercubes having side length one has a common intersection region, all of these hypercubes can be placed inside a hypercube of side length two.  As a result, Alice can adaptively present $2^d$ hypercubes each of side length one inside $E_t$ such that both invariants (III) and (IV) are maintained.
This completes the proof of the theorem.
\end{proof}

\section{Conclusion}\label{Conclusion}
No deterministic online algorithm can obtain a competitive ratio lower than $\Omega(n)$ for piercing $n$ intervals.  Due to this pessimistic result, we restricted our attention to similarly sized objects.  
We propose upper bounds for piercing similarly sized $\alpha$-aspect$_{\infty}$ fat objects in higher dimensions. By placing  specific values of $\alpha$, the tight asymptotic bounds can be obtained for similarly sized objects (like intervals, squares, disks, and $\alpha$-aspect polygons) in $\IR$ and $\IR^2$.  There are  gaps between the lower and the upper bounds in higher dimensions. We propose bridging these gaps as a future direction of research.
We only consider the deterministic model, which raises the question of whether randomization helps.

\subsection*{Acknowledgement} 
{The authors would like to thank anonymous reviewers for their constructive feedback that enhanced the quality of the article.}

\section*{Conflict of Interest}
The authors declare that no financial and non-financial competing interests are relevant to this article's content.

\bibliographystyle{plain}
\bibliography{references}
\end{document}